\newtheorem{proposition}{Proposition}
\newcommand{\qa}{{\bf a}}
\newcommand{\qb}{{\bf b}}
\newcommand{\qg}{{\bf g}}
\newcommand{\qq}{{\bf q}}
\newcommand{\qr}{{\bf r}}
\newcommand{\qw}{{\bf w}}
\newcommand{\qx}{{\bf x}}
\newcommand{\qy}{{\bf y}}
\newcommand{\qZ}{{\bf Z}}
\DeclareMathOperator*{\argmax}{arg\,max}
\newcommand{\Sn}{\sigma_n^2}
\newcommand{\FD}{\mathsf{FD}}
\newcommand{\HD}{\mathsf{HD}}
\newcommand{\vFD}{\mathsf{NAFD}}
\newcommand{\Ntx}{N}
\newcommand{\Nrx}{N}
\newcommand{\dl}{\mathtt{dl}}
\newcommand{\ul}{\mathtt{ul}}
\newcommand{\hgmkpd}{\hat{\qg}_{mk'}^{\dl}}
\newcommand{\tgmkd}{\tilde{\qg}_{mk}^{\dl}}
\newcommand{\tgmlu}{\tilde{\qg}_{m\ell}^{\ul}}
\newcommand{\gmkd}{\qg_{mk}^{\dl}}
\newcommand{\hgmkd}{\hat{\qg}_{mk}^{\dl}}
\newcommand{\hgmlu}{\hat{\qg}_{m\ell}^{\ul}}
\newcommand{\gmlu}{\qg_{m\ell}^{\ul}}
\newcommand{\gamdmk}{\gamma_{mk}^{\dl}}
\newcommand{\gamdmkp}{\gamma_{mk'}^{\dl}}
\newcommand{\gamuml}{\gamma_{m\ell}^{\ul}}
\newcommand{\Pbhm}{P_{\mathtt{bh},m}}
\newcommand{\PUfix}{P_{\mathtt{U,fixed}}}
\newcommand{\PbhvFD}{P_{bh}^{\vFD}}
\newcommand{\PtotvFD}{P_{\mathtt{total}}^{\vFD}}
\newcommand{\PbhfulvFD}{P_{bh,full}^{\vFD}}
\newcommand{\PtotbhvFD}{ P_\mathtt{total,fbh}^{\vFD}}
\newcommand{\etamk}{\eta_{mk}}
\newcommand{\vsl}{\varsigma_\ell}
\newcommand{\betamkd}{\beta_{mk}^{\dl}}
\newcommand{\betakldu}{\beta_{k\ell}^{\mathtt{du}}}
\newcommand{\betamlu}{\beta_{mq}^{\ul}}
\newcommand{\alphml}{\alpha_{m\ell}}
\DeclareMathOperator{\aaa}{\mathbf{a}}
\DeclareMathOperator{\FF}{\mathcal{F}}
\DeclareMathOperator{\K}{\mathcal{K}}
\DeclareMathOperator{\OO}{\mathcal{O}}
\DeclareMathOperator{\z}{\mathbf{z}}
\DeclareMathOperator{\HHH}{\mathcal{H}}
\DeclareMathOperator{\SSS}{\mathcal{S}}
\DeclareMathOperator{\CN}{\mathcal{CN}}
\DeclareMathOperator{\AAA}{\mathcal{A}}
\DeclareMathOperator{\bb}{\mathbf{b}}
\DeclareMathOperator{\MM}{\mathcal{M}}
\DeclareMathOperator{\x}{\mathbf{x}}
\DeclareMathOperator{\q}{\mathbf{q}}
\DeclareMathOperator{\g}{\mathbf{g}}
\DeclareMathOperator{\Z}{\mathbf{Z}}
\DeclareMathOperator{\OOmega}{\boldsymbol{\omega}}
\DeclareMathOperator{\ETA}{\boldsymbol{\eta}}
\DeclareMathOperator{\VARSIGMA}{\boldsymbol{\varsigma}}
\DeclareMathOperator{\THeta}{\boldsymbol{\theta}}
\DeclareMathOperator{\ALPHA}{\boldsymbol{\alpha}}
\DeclareMathOperator{\EEEE}{\mathtt{EE}}
\DeclareMathOperator{\vvFD}{\textbf{NAFD} }
\DeclareMathOperator{\RvFD}{\textbf{R-NAFD} }
\DeclareMathOperator{\GvFD}{\textbf{G-NAFD} }
\DeclareMathOperator{\HHD}{\textbf{HD} }
\DeclareMathOperator{\FFD}{\textbf{FD} }
\DeclareMathOperator{\SSSI}{\text{SI}}
\newtheorem{remark}{Remark}
\title{ Network-Assisted Full-Duplex Cell-Free Massive MIMO: Spectral and Energy Efficiencies}
\author{Mohammadali Mohammadi,~\IEEEmembership{Member,~IEEE,} Tung T. Vu,~\IEEEmembership{Member,~IEEE,}\\
Hien Quoc Ngo,~\IEEEmembership{Senior Member,~IEEE,} and  Michail Matthaiou,~\IEEEmembership{Fellow,~IEEE}
\thanks{M. Mohammadi, T. T. Vu, H. Q. Ngo, and M. Matthaiou are with the Centre for Wireless Innovation (CWI), Queen's University Belfast, U.K. Email:\{m.mohammadi, hien.ngo, m.matthaiou\}@qub.ac.uk. (\emph{Corresponding author: Mohammadali Mohammadi.})
}
\thanks{T.~T.~Vu is also with the Department of Electrical Engineering (ISY), Link\"{o}ping University, SE-581 83 Link\"{o}ping, Sweden. Email: thanh.tung.vu@liu.se.}
}
\begin{document}

\bstctlcite{IEEEexample:BSTcontrol}
\maketitle

\begin{abstract} 
We consider network-assisted full-duplex (NAFD) cell-free massive multiple-input multiple-output (CF-mMIMO) systems, where full-duplex (FD) transmission is virtually realized via half-duplex (HD) hardware devices. The HD access points (APs) operating in uplink (UL) mode and those operating in downlink (DL) mode simultaneously serve DL and UL user equipments (UEs) in the same frequency bands. We comprehensively analyze the performance of NAFD CF-mMIMO from both a spectral efficiency (SE) and energy efficiency (EE) perspectives. Specifically, we propose a joint optimization approach that designs the AP mode assignment, power control, and large-scale fading (LSFD) weights to improve the sum SE and EE of NAFD CF-mMIMO systems. We formulate two mixed-integer nonconvex optimization problems of maximizing the sum SE and EE, under realistic power consumption models, and the constraints on minimum individual SE requirements, maximum transmit power at each DL AP and UL UE. The challenging formulated problems are transformed into tractable forms and two novel algorithms are proposed to solve them using successive convex approximation techniques. More importantly, our approach can be applied to jointly optimize power control and LSFD weights for maximizing the sum SE and EE of HD and FD CF-mMIMO systems, which, to date, has not been studied. Numerical results show that: (a) our joint optimization approach significantly outperforms the heuristic approaches in terms of both sum SE and EE; (b) in CF-mMIMO systems, the NAFD scheme can provide approximately $30\%$ SE gains, while achieving a remarkable EE gain of up to $200\%$ compared with the HD and FD schemes.  

\let\thefootnote\relax\footnotetext{The work of M. Mohammadi and M. Matthaiou was supported by a research grant from the Department for the Economy Northern Ireland under the US-Ireland R\&D Partnership Programme. The work of T. T. Vu and H. Q. Ngo was supported by the U.K. Research and Innovation Future Leaders Fellowships under Grant
MR/S017666/1. The work of T.~T.~Vu was also supported in part by ELLIIT and the KAW Foundation. Parts of this paper were presented at IEEE SPAWC 2022~\cite{mohammadSPAWC2022}.  }

\end{abstract}

\begin{IEEEkeywords}
	Access point mode assignment, cell-free massive  multiple-input multiple-output, energy efficiency, full-duplex, half-duplex, large-scale fading decoding weight, network-assisted full-duplex, power control, spectral efficiency.  
\end{IEEEkeywords}

\section{Introduction}
Cell-free massive multiple-input multiple-output (CF-mMIMO) networks and full-duplex (FD) communications are two technological platforms to address the explosive growth of data demands driven by smartphones, tablets, and other media-hungry devices and are expected to play an important part in fifth-generation (5G) networks and beyond~\cite{Matthaiou:COMMag:2021}. CF-mMIMO represents a scalable and practical implementation of the distributed antenna~\cite{Zhu:JSAC:2011} and network MIMO systems concepts~\cite{Venkatesan:NMIMO}, wherein a large number of access points (APs) are distributed over a coverage area and coordinated by several central processing units (CPUs) to coherently serve many user equipments (UEs) in the same time-frequency resources~\cite{Hien:cellfree}. The application of FD APs in CF-mMIMO enables simultaneous transmission towards downlink (DL) UEs and reception  from uplink (UL) UEs on the same frequency, thus, holds the promise of achieving higher spectral and energy efficiency (SE-EE)~\cite{Nguyen:JSAC:2020}.  

Although the potential gains of FD can be easily foreseen, the key challenge in achieving FD communication is the high transceiver complexity, required to mitigate the self-interference (SI) caused by the signal leakage from the transceiver output to the input~\cite{Duarte:PhD:dis,Sabharwal14JSAC}.  While recent advances in active and passive SI suppression (SIS) techniques~\cite{Korpi:JSAC:2014,Hong:COMMAG:2014} have made the FD transceivers feasible in practice, SIS at the FD AP imposes huge infrastructure costs to the operators as a significant additional processing burden is required. Furthermore, the use of FD transceivers in CF-mMIMO networks gives rise to an additional source of interference, termed as cross-link interference (CLI), i.e., the interference received by the receiving antennas of one AP from the transmitting antennas of another AP, as well as the interference received by a DL UE from other UL UEs. Both SI and CLI subsequently impact the signal processing and resource allocation strategies, making them nontrivial tasks in FD CF-mMIMO compared to the traditional HD CF-mMIMO. More importantly,  CLI results in higher power consumption to achieve the UEs' SE requirements while SIS entails power-hungry hardware at the FD APs~\cite{Zhang2015}. Therefore, the fundamental challenge behind the FD CF-mMIMO realization is how to improve the SE while maintaining the EE at the optimum level. 

The aforementioned challenges have spurred research at the system level design for FD CF-mMIMO systems. More specifically, few efforts have been devoted towards developing functionalities, such as power control, scheduling, and AP selection, to improve the SE and/or EE~\cite{Nguyen:JSAC:2020,Datta:JOP:2022}. Nevertheless, SI and CLI are still major performance bottlenecks, which are significantly detrimental from the power consumption and EE perspective. To deal with the negative impact of CLI, one attractive approach is to revisit and redesign the network architecture. To this end, a novel paradigm for the physical layer of the CF-mMIMO networks, called \emph{network-assisted full duplexing (NAFD)}, has been embodied in~\cite{Wang:TCOM:2020}.  In NAFD CF-mMIMO, the APs can operate in FD (all APs perform DL and UL at the same time over the same frequency), hybrid-duplex (both FD and HD APs exist in the network), or flexible-duplex (APs operate in HD mode), thus, NAFD unifies all duplex modes in the network~\cite{Wang:TCOM:2020,Jiamin:TWC:2021}. The transmission mode of each AP (UL reception and/or DL transmission) can be dynamically designed based on the network conditions and requirements, accordingly both UL and DL services could be supported simultaneously and flexibly.  When NAFD is forced to be established via pure flexible-duplex, SI and related challenges are completely removed. Moreover, the amount of CLI becomes fairly less than that in the FD counterpart. This is due to the fact that only a part of  the APs contribute to DL transmissions, whilst the DL transmissions and remaining APs are dedicated to serving the UL transmissions. Furthermore, AP mode assignment offers a flexible degree-of-freedom to better manage the CLI compared to FD. To reduce the CLI and accordingly enhance the SE-EE of the NAFD CF-mMIMO, joint design of AP mode assignment, UL/DL power control coefficients, and large-scale fading decoding (LSFD) weight design for the UL APs is a promising, but nontrivial direction, which is the main motivation of this paper.

\subsection{Review of Related Literature }
The emulation of the FD operation by spatially separating HD base stations (BSs) in cellular networks, for serving UL and DL UEs, was first introduced in~\cite{Thomsen:WCL:2016}. This scheme, which is known as COMPflex in the literature,  provides advantages over using a FD BS, in terms of SE and EE.  In~\cite{Xin:ACCESS:2017}, a spatial domain duplex, called a bidirectional dynamic network, was proposed for large-scale distributed antenna systems (DAS). In bidirectional dynamic networks, different remote antenna units (RAUs) are used to serve the UL and DL UEs at the same time and over the same frequency band. However, both COMPflex and bidirectional dynamic networks are impractical due to the immense fronthaul signaling and huge computational complexity, which poses performance limitations and system scalability issues.

Inspired by the idea of COMPflex and by leveraging the join processing virtue of CF-mMIMO, NAFD CF-mMIMO networks were proposed in~\cite{Wang:TCOM:2020}. Recently, NAFD CF-mMIMO systems have been investigated from various aspects and/or under different setups~\cite{Wang:TCOM:2020,Xia:TVT:2020,Jiamin:TWC:2021,Xinjiang:TWC:2021,Zhu:COML:2021,Xia:China:2021,Xia:SYSJ:2022}. In particular, the SE of the NAFD CF-mMIMO network has been studied in~\cite{Wang:TCOM:2020} for zero forcing (ZF) and regularized ZF (RZF) precoders in DL and minimum-mean-square-error (MMSE) receiver in UL. An optimization framework for the transceiver design for large-scale DAS with NAFD was developed in~\cite{Xia:TVT:2020}, by maximizing the sum UL and DL SE subject to quality-of-service (QoS) constraints and backhaul constraints. The authors in~\cite{Jiamin:TWC:2021} proposed to utilize a beamforming training scheme to perform interference cancellation at RAUs and coherent decoding at DL UEs. Moreover, closed-form expressions for the UL achievable rates with maximum ratio combining (MRC) and ZF receivers and for the DL achievable rates with maximum ratio transmission (MRT) and ZF beamforming were derived. In~\cite{Xinjiang:TWC:2021}, an optimization framework was established for joint user selection and transceiver design in cell-free with NAFD, where the QoS constraint and fronthaul compression are considered.

Nevertheless, a main underlying assumption in all aforementioned literature~\cite{Wang:TCOM:2020,Xia:TVT:2020,Jiamin:TWC:2021,Xinjiang:TWC:2021} is the fixed mode assignment in NAFD networks, i.e., the UL or DL transmission mode of the RAUs/APs has been already given. This assumption, however, renders the flexible adjustment of UL/DL traffic and efficient resource usage inapplicable. In~\cite{Zhu:COML:2021,Xia:China:2021,Xia:SYSJ:2022}, the problem of RAU mode selection was addressed from different perspectives. In~\cite{Zhu:COML:2021}, RAUs' mode selection for a CF-mMIMO with NAFD was investigated, to maximize the SE of DL and UL UEs, where power budget constraints are considered. This work assumes all the antennas of the same RAU working in the same mode and an RAU is either UL or DL only. In~\cite{Xia:China:2021} and~\cite{Xia:SYSJ:2022}, the authors extended their work in~\cite{Zhu:COML:2021}, to consider hybrid-duplex mode in NAFD CF-mMIMO, where antenna mode assignment at each multi-antenna RAU is investigated. More specifically, a SE maximization problem under power and QoS constraints was studied in~\cite{Xia:China:2021} and a two-stage strategy of antenna mode selection and transceiver design has been proposed. The secrecy SE maximization problem in NAFD CF-mMIMO systems, under the constraints of the operation mode of antennas, UL receivers, and transmit power of APs/UEs was studied in~\cite{Xia:SYSJ:2022},  where artificial noise is applied to interfere with the eavesdropper's reception.

\subsection{Research Gap and Main Contributions}
The ongoing research efforts have mainly focused on two main directions: 1) SE analysis of the NAFD CF-mMIMO under fixed mode assignment at the RAUs (APs), and 2) Enhancing the SE by adopting the AP mode assignment and transceiver design.  However, the main drawback of these studies \cite{Wang:TCOM:2020,Xia:TVT:2020,Jiamin:TWC:2021,Xinjiang:TWC:2021,Zhu:COML:2021,Xia:China:2021,Xia:SYSJ:2022} is that they use the instantaneous channel state information (CSI) for system level designs rather than the statistical CSI. Therefore, all resource allocation  and mode assignment designs must be recomputed quickly once the small-scale fading coefficients are changed. Moreover, these designs must be done at the CPUs, thus, the APs have to send all channel estimates to the CPUs which will cause very large overhead, especially in CF-mMIMO where the numbers of APs and UEs are very large. Finally, they require instantaneous CSI knowledge at the UEs which causes huge resources and overhead in the systems with many UEs~\cite{Hien:cellfree}. Another major issue with the aforementioned optimization frameworks is the sub-optimality, as the original problems are decoupled into sub-problems, which are alternately solved via iterative algorithms. Furthermore, all the pioneering research in the area of NAFD CF-mMIMO has focused on optimization schemes to maximize the SE, while the trade-off between SE and EE has remained an open research problem.  From a green perspective, energy consumption has become a critical concern for planning 5G/6G networks because mobile communication networks may contribute significantly towards the global carbon footprint. To ensure sustainability, 5G networks should operate at low energy consumption levels while still achieving large SE~\cite{Shafi:JSAC:2017}.

\begin{table*}
	\caption{ List of Notation} 
	\vspace{-0.5em}
	\centering 
	\begin{tabular}{|c | c |}
\hline
$K_d (K_u)$ & Number of DL (UL) UEs\\
\hline
$M$ & Number of APs\\
\hline
$N$ & Number of antennas per-AP\\
\hline
$\gmkd$ ($\gmlu$) & Channel vector between DL UE $k$ (UL UE $\ell$) and AP $m$\\
\hline
$\betamkd$ ($\betamlu$) &  Large-scale fading coefficient\\
\hline
$h_{k\ell}$ &Channel gain between the UL UE $\ell$ to the DL UE $k$\\
\hline
$\qZ_{mi}$ &Channel matrix from AP $m$ to AP $i$\\
\hline
$\etamk$  &Power control coefficient at the AP $m$\\
\hline
${\varsigma}_{\ell}$ &Transmit power control coefficient at UL UE $\ell$
\\
\hline
$\rho_t$ & Normalized transmit power of each pilot symbol\\
\hline
$\rho_d$ & Maximum normalized transmit power at each AP\\
\hline
$\rho_u$ & Maximum normalized transmit power at each UL UE\\
\hline
$\tau_c$ & Length of coherence block \\
\hline
$\tau_t$ &Length of pilot sequences  \\
\hline
$a_m$, $b_m$ &AP mode assignment binary variables\\
\hline
$\alphml$ &LSFD weight at AP $m$ corresponding to UL UE $\ell$\\
\hline
$\mathcal{S}_\ul^o$ ($\mathcal{S}_\dl^o$) &Minimum SE required by the $\ell$-th UL UE ($k$-th DL UE)\\
\hline
$P_{\mathtt{cdl},m}$  ($P_{\mathtt{cul},m}$) &Internal power consumption at AP $m$ for DL (UL) transmissions\\
\hline
$\zeta_m$  ($\chi$) &Power amplifier efficiency at AP $m$ (UE)\\
\hline
$P_{\mathtt{bt},m}$ &Traffic-dependent backhaul power\\
\hline
$P_{\mathtt{fdl},m}$ ($P_{\mathtt{ful},m}$) &Fixed power consumption for each DL (UL) backhaul\\
\hline
$P_{\mathtt{D},\ell}$ $(P_{\mathtt{U},k})$ &Fixed power consumption for DL (UL) UE\\
\hline
	\end{tabular}
	\label{Notations}
	\vspace{0.95em}
\end{table*}

The research on NAFD CF-mMIMO is still in its infancy stage and several key challenges must be considered.  While it is well-known that HD/FD CF-mMIMO is energy efficient~\cite{Hien:TGCN:2018,Nguyen:JSAC:2020}, there is still no previous work on characterizing/optimizing the SE and EE of NAFD CF-mMIMO, relying on statistical CSI. To the authors' best knowledge, the SE and EE maximization, that takes into account the effects of imperfect CSI, hardware power consumption, QoS requirement of all UEs, per-AP and UE power control, AP mode assignment, and LSFD weights has not been studied for NAFD CF-mMIMO. The closest work to our research is~\cite{chowdhury2021can}, wherein the performance of a NAFD CF-mMIMO system under dynamic time division duplex (TDD) was analyzed, where the transmission mode at each AP is scheduled so that the sum UL-DL SE is maximized. However, in~\cite{chowdhury2021can}, the AP modes are scheduled by a greedy algorithm, which is not optimal. Moreover, the SE requirements for UL and DL UEs were ignored, and the impact of power control as well as LSFD for UL reception were not investigated.

Motivated by filling the above-mentioned knowledge gap in the literature, we consider an  NAFD CF-mMIMO system with TDD operation, where  HD multi-antenna APs simultaneously serve multiple UL and DL UEs on the same time-frequency resources. Then, the SE and EE of the NAFD CF-mMIMO system are investigated comprehensively. The main contributions of this paper are summarized as follows:

\begin{itemize}
\item We propose a joint optimization approach of AP mode assignment, power control, and LSFD weights to improve the SE and EE of NAFD CF-mMIMO systems. The proposed approach is general and can be applied to both the traditional HD and FD CF-mMIMO systems.
 
\item We formulate two novel optimization problems for SE and EE maximization of the NAFD CF-mMIMO system. The formulated problems are under realistic power consumption models (including power consumption for hardware and backhaul links), individual SE requirements of both UL and DL UEs, per-AP and per-UL UE power constraints. 

\item Two new algorithms are then proposed to solve the challenging formulated mixed-integer non-convex problems. In particular, we transform the formulated problems into more tractable problems with continuous variables only. Then, we solve the problems using successive convex approximation techniques.
 
\item Numerical results show that our joint optimization approach significantly outperforms the heuristic approaches. The simulation results also confirm that the NAFD scheme remarkably improves the performance of the CF-mMIMO in terms of both the SE and EE over the traditional HD and FD schemes. 
\end{itemize}

\subsection{Paper Organization and Notation}
The remainder of this paper is organized as follows. In Section~\ref{sec:Sysmodel}, we describe the NAFD CF-mMIMO model, derive the UL/DL SE expressions and present the power consumption model. The formulation of the SE and EE optimization problem and the derivation of their solutions are provided in Section~\ref{sec:SE} and~\ref{sec:EE}, respectively. In Section~\ref{sec:bench}, several benchmarks are presented. Numerical results and discussions are provided in Section~\ref{Sec:Numer}, while Section~\ref{Sec:conc} concludes the paper.

\textit{Notation:} We use bold upper case letters to denote matrices, and lower case letters to denote vectors. The superscripts $(\cdot)^*$, $(\cdot)^T$ and $(\cdot)^\dag$ stand for the conjugate, transpose, and conjugate-transpose (Hermitian), respectively;  $\mathbf{I}_N$ denotes the $N\times N$ identity matrix. The zero mean circular symmetric complex Gaussian distribution having variance $\sigma^2$ is denoted by $\mathcal{CN}(0,\sigma^2)$. Finally, $\mathbb{E}\{\cdot\}$ denotes the statistical expectation.  Table~\ref{Notations} lists some of the important notations used in this article.
 
\vspace{-0em}
\section{System model}~\label{sec:Sysmodel}
We consider a NAFD CF-mMIMO system  under TDD operation, where $M$ APs serve $K_u$ UL UEs and $K_d$ DL UEs. Each AP is connected to the CPU via a high-capacity backhaul link. Each UE is equipped with one single antenna, while each AP is equipped with $N$ antennas. All APs and UEs are HD devices.  As shown in Fig.~\ref{fig:systemmodel}, the assigned UL and DL APs perform simultaneous DL and UL transmissions over the same frequency band.  Each coherence block includes two phases: UL training for channel estimation and UL-and-DL payload data transmission.

\subsection{Uplink Training for Channel Estimation}
\label{phase:ULforCE}
The channel vector between the $k$-th DL UE ($\ell$-th UL UE) and the $m$-th AP is denoted by $\gmkd\in\mathbb{C}^{\Ntx \times 1}$ ($\gmlu\in\mathbb{C}^{\Nrx \times 1}$), $\forall k \in \K_d\triangleq \{1,\dots,K_d\}, \ell \in \K_u\triangleq \{1,\dots,K_u\}, m \in \MM \triangleq \{1, \dots, M\}$. It is modeled as $\gmkd=\sqrt{\betamkd}\tgmkd,~(\gmlu=\sqrt{\betamlu}\tgmlu) $, where $\betamkd$ ($\betamlu$) is the large-scale fading coefficient and $\tgmkd\in\mathbb{C}^{\Ntx \times 1}$ ($\tgmlu\in\mathbb{C}^{\Ntx \times 1}$) is the small-scale fading vector whose elements are independent and identically distributed (i.i.d.) $\mathcal{CN} (0, 1)$ random variables (RVs). Moreover, the channel gain between the UL UE $\ell$ to the DL UE $k$ is denoted by $h_{k\ell}$. It can be modeled as $h_{k\ell}=(\betakldu)^{1/2}\tilde{h}_{k\ell}$, where $\betakldu$ is the large-scale fading coefficient and $\tilde{h}_{k\ell}$ is a $\mathcal{CN}(0,1)$ RV. Finally, the interference links among the APs are modeled as Rayleigh fading channels. Let $\qZ_{mi}\in \mathbb{C}^{\Nrx\times\Ntx}$, $i\neq m$, be the channel matrix from AP $m$ to AP $i$, $\forall m,i\in\MM$, whose elements are  i.i.d. $\mathcal{CN}(0,\beta_{mi})$ RVs. Here, we set $\Z_{mm} = 0, \forall m$. 

\begin{figure}[t]
	\centering
	\vspace{0em}
	\includegraphics[width=90mm]{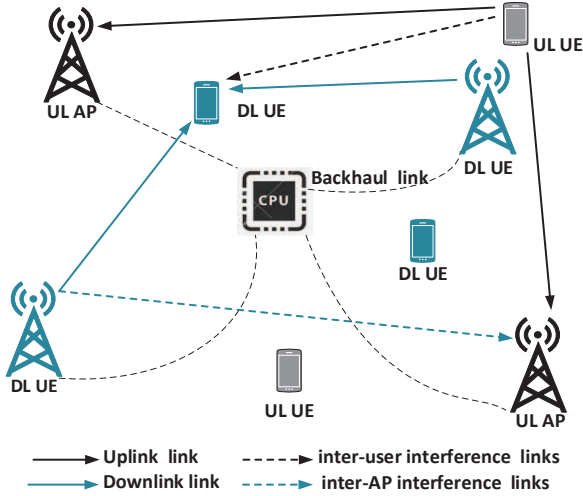}
 	\vspace{-17em}
	\caption{ Illustration of a NAFD CF-mMIMO system with the assigned UL and DL APs along with the received desired and interference signals at a typical DL UE and UL AP.}
	\vspace{-1em}
	\label{fig:systemmodel}
\end{figure}

In each coherence block of length $\tau_c$, all UEs are assumed to transmit their pairwisely orthogonal pilot sequences of length $\tau_t$ to all the APs, which requires $\tau_t\geq K_d + K_u$. At AP $m$, $\g^\dl_{mk}$  and $\g^\ul_{m\ell}$ are estimated by using the received pilot signals and the MMSE estimation technique. By following~\cite{ngo17TWC}, the MMSE estimates $\hgmkd$ and $\hgmlu$ of $\gmkd$  and $\gmlu$ are $\hgmkd \sim \mathcal{CN}(\boldsymbol{0},\gamdmk \mathbf{I}_N)$, and $\hgmlu \sim \mathcal{CN}(\boldsymbol{0},\gamuml \mathbf{I}_N)$, respectively, where $\gamdmk \triangleq
\frac{{\tau_t\rho_t}(\betamkd)^2}
{\tau_t\rho_t
\betamkd+1}, 
\gamuml \triangleq 
\frac{{\tau_t\rho_t}(\betamlu)^2}
{\tau_t\rho_t
\betamlu
+1}$, with $\rho_t$ being the normalized transmit power of each pilot symbol.
\vspace{-0.5em}
\subsection{Downlink-and-Uplink Payload Data Transmission}
In this phase, the APs are able to switch between the UL and DL modes. The decision of which mode is assigned to each AP is optimized to achieve the highest sum SE or total EE of the network as will be discussed in Sections~\ref{sec:SE} and~\ref{sec:EE}, respectively. Note that the AP mode selection is performed on the large-scale fading timescale which changes very slowly with time. The binary variables to indicate the mode assignment for each AP $m$ are defined as
\begin{align}
\label{a}
a_{m} \triangleq
\begin{cases}
  1, & \text{if AP $m$ operates in the DL mode,}\\
  0, & \mbox{otherwise}.
\end{cases}, \forall m
\\
\label{b}
b_{m} \triangleq
\begin{cases}
  1, & \text{if AP $m$ operates in the UL mode,}\\
  0, & \mbox{otherwise}.
\end{cases}, \forall m
\end{align}
Here, we have
\vspace{-1em}
\begin{align}
\label{sumab}
    a_m + b_m = 1, \forall m,
\end{align}
to guarantee that AP $m$ only operates in either the DL or UL mode. 

\subsubsection{Downlink payload data transmission}
By using the local channel estimates, the APs perform maximum-ratio (MR) processing (a.k.a. conjugate beamforming) for the signals transmitted to $K_d$ DL UEs. Our choice of MR beamforming is inspired by the fact that it has low computational complexity and can be implemented in a distributed manner. Thus, most of the processing is done locally at the APs and there is no need to exchange CSI between the APs and CPU~\cite{Hien:cellfree,Emil:TWC:2021:Aging}.  Moreover,  from a green perspective, MR processing  has much less power consumption as compared to the ZF and MMSE processing schemes~\cite{Emil:TWC:2015:EE}.

Let $s_k^\dl$ denote the intended symbol for DL UE $k$. We assume that $s_k^\dl$ is a RV with zero mean and unit variance. The transmitted signal $\qx_{m}^{d}\in\mathbb{C}^{\Ntx\times 1}$ from AP $m$ in the DL mode is generated by first scaling each symbol $s_k^\dl$
with the power control coefficient 
\begin{align}
    \label{theta}
    \theta_{mk} \geq 0, \forall m,k,
\end{align}
and then
multiplying them with the MR precoding vector  $\left(\hgmkd\right)^*$ as 
\begin{align}
 \qx_{m}^{\dl}
= \sqrt{\rho_d}\sum_{k \in \mathcal{K}_d} \theta_{mk} \left(\hgmkd\right)^*
s_{k}^{\dl},   
\end{align}
where $\rho_d$ is the maximum normalized transmit power at each AP. Here, we enforce
\begin{align}
\label{etaa:relation}
      (\theta_{mk} = 0, \forall k,\,\, \text{if}\,\, a_m = 0),  \forall m,
\end{align}
to ensure that if AP $m$ does not operate in the DL mode, all the transmit powers $\rho_d \theta_{mk}^2, \forall k$, at AP $m$ are zero.
Note that AP $m$ is required to meet the average normalized
power constraint, i.e., $\mathbb{E}\left\{\|\qx_{m}^{\dl}\|^2\right\}\leq \rho_d$, which can also be
expressed as the following per-AP power constraint~\cite{Hien:cellfree}
\begin{align}
\label{DL:power:cons}
\sum_{k\in\mathcal{K}_{d}} \gamdmk \theta_{mk}^2 \leq \frac{1}{\Ntx}, \forall m.
\end{align}
The received signal at DL UE $k$ is written as
\begin{align}~\label{eq:ykdl}
y_k^{\dl}
&=
\sqrt{\rho_d}\sum_{m \in \mathcal{M}} \theta_{mk}
\left(\gmkd\right)^T\left(\hgmkd\right)^*
s_{k}^{\dl}
\nonumber\\
&\hspace{2em}+
\sqrt{\rho_d}
\sum_{m \in \mathcal{M}}
\sum_{k'\in\mathcal{K}_d \setminus k} \theta_{mk'}
\left(\gmkd\right)^T\left(\hgmkpd\right)^*
s_{k'}^{\dl}
\nonumber\\
&\hspace{2em}
+
\sum_{\ell\in \mathcal{K}_{u}}h_{k\ell}x_{\ell}^{\ul}+w_{k}^{\dl},
\end{align}
where $w_{k}^{\dl}\sim\mathcal{CN}(0,1)$ is the AWGN at DL UE $k$. We notice that the third term in~\eqref{eq:ykdl}, is the CLI caused by the UL UEs due to concurrent transmissions of DL and UL UEs over the same frequency band, while $x_\ell^\ul$ denotes the transmit signal from the $\ell$-th UL UE. 

\subsubsection{Uplink payload data transmission}
The transmitted signal  from  UL UE $\ell$ is represented by $x_{\ell}^\ul  = \sqrt{\rho_u {\varsigma}_\ell} s_{\ell}^{\ul}$, where $s_{\ell}^\ul$, with $\mathbb{E}\left\{|s_{\ell}^\ul|^2\right\}=1$, and $\rho_u$ denote respectively the transmitted symbol by the $\ell$-th UL UE and  the maximum normalized transmit power at each UL UE, and ${\varsigma}_{\ell}$ is the transmit power control coefficient at UL UE $\ell$ with
\begin{align}
\label{UL:power:cons}
    0\leq {\varsigma}_{\ell} \leq 1, \forall \ell.
\end{align}
The UL APs with $b_m=1, \forall m$, receive a transmit signal from all UL UEs. The received signal $\qy_{m}^{\ul}\in\mathbb{C}^{\Nrx \times 1}$ at AP $m$ in the UL mode can be written as
\begin{align}\label{eq:ymul}
\qy_{m}^{\ul}
&=
\sqrt{\rho_u}\sum_{\ell\in \mathcal{K}_{u}}\sqrt{b_m {\varsigma}_{\ell}}\qg_{m\ell}^{\ul} s_{\ell}^{\ul}
\nonumber\\
&\hspace{2em}
+
\sqrt{\rho_d}\sum_{i\in\mathcal{M}\setminus m}\sum_{k\in \mathcal{K}_d}
\sqrt{b_m } \theta_{ik}
\qZ_{mi}
(\hat{\qg}_{ik}^\dl)^*s_k^\dl
\nonumber\\
&\hspace{2em}
+\sqrt{b_m}\qw_{m}^{\ul},
\end{align}
where $\qw_{m}^{\ul}$ is the $\mathcal{CN}(0,1)$ AWGN vector. We recall that~\eqref{eq:ymul} captures the fact that if AP $m$ does not operate in the UL mode, i.e., $b_m=0$, it does not receive any signal, i.e, $\qy_{m}^{\ul}=\boldsymbol{0}$. 

Then, AP $m$ performs MRC processing (i.e., matched filter) by applying the Hermitian of the (locally obtained) channel estimation vector $\hgmlu$ to the received signal in~\eqref{eq:ymul}. The resulting $(\hgmlu)^\dag\qy_{m}^{\ul}$ is then forwarded to the CPU for signal detection. In order to improve the achievable UL SE, the forwarded signal is further multiplied by the LSFD weight, $\alphml, \forall m,k$. The aggregated received signal for UL UE $\ell, \forall \ell,$ at the CPU can be written as~\cite{Bashar:TWC:2019}
\begin{align}\label{eq:rul}	r_{\ell}^{\ul}=\sum_{m=1}^{M}\alphml(\hgmlu)^\dag\qy_{m}^{\ul}. 
\end{align}
Finally, $s_{\ell}^{\ul}$ is detected from $r_{\ell}^{\ul}$. Without loss of generality, we
assume that
\begin{align}\label{eq:alphml}
|\alphml|^2\leq 1,  \quad \forall \ell, m.
\end{align}

\subsubsection{Downlink SE}
In order to detect $s_{k}^{\dl}$ from the received signal \eqref{eq:ykdl}, the $k$-th DL UE is assumed to rely on the stochastic channel state information. To this end, by applying the use-and-then-forget capacity-bounding technique~\cite{Hien:cellfree}, a closed-form expression for the achievable DL SE (in bits/s/Hz) can be obtained as
\begin{align}~\label{eq:DL:SE}
\mathcal{S}_{\dl,k}^{\vFD} (\qa, \boldsymbol \theta, {\boldsymbol{\varsigma}}) =  \frac{\tau_c-\tau_t}{\tau_c}
\log_2 \left(1  
    + \frac{\Xi_k^2}{\Omega_k}
	\right),
	\end{align}
where $\qa \triangleq \{a_m\}$, $\boldsymbol{\theta}\triangleq \{\theta_{mk}\}$, ${\boldsymbol{\varsigma}} \triangleq \{{\varsigma}_{\ell}\}, \forall m,k,\ell$, and
\begin{align}
    \nonumber
    &\Xi_k (\boldsymbol \theta) \triangleq \Ntx \sqrt{\rho_{d}}
                           \sum_{{m\in\mathcal{M}}}
                            \theta_{mk} \gamdmk,
    \\
    \nonumber
    &\Omega_k (\boldsymbol \theta, {\boldsymbol{\varsigma}}) \triangleq \rho_{d}\Ntx
    \!\sum_{k'\in\mathcal{K}_{d}}\!\sum_{m\in\mathcal{M}}\!\!\!
    \theta_{mk'}^2 \betamkd\gamdmkp + \rho_u\!\!\sum_{\ell\in\mathcal{K}_u} \!\! {\vsl} \betakldu \!+\! 1.
\end{align}
 The detailed derivation of~\eqref{eq:DL:SE} is provided in Appendix~\ref{DL:SE:proof}.
\subsubsection{Uplink SE}
The CPU detects the desired signal $s_{\ell}^{\ul}$ from $	\qr_{\ell}^{\ul}$ in~\eqref{eq:rul}. Since the CPU does not know the instantaneous CSI, it can efficiently use statistical knowledge of the channels when performing the detection. Using again the use-and-then-forget capacity-bounding technique~\cite{Hien:cellfree}, we obtain the achievable UL SE (in bits/s/Hz) of the UL UE $\ell$ as
\begin{align}
    \label{eq:UL:SE}
	\mathcal{S}_{\ul,\ell}^{\vFD} (\qb, \boldsymbol{\varsigma}, \boldsymbol{\theta}, \boldsymbol{\alpha} )
	=\! \frac{\tau_c\!-\!\tau_t}{\tau_c}\log_2
	(1+ \text{SINR}_{\ul,\ell}^{\vFD}(\qb, \boldsymbol{\varsigma}, \boldsymbol{\theta}, \boldsymbol{\alpha} )),
\end{align}
where $\text{SINR}_{\ul,\ell}^{\vFD}(\qb, \boldsymbol{\varsigma}, \boldsymbol{\theta}, \boldsymbol{\alpha} )$ is given in~\eqref{eq:SINRulell} at the top of the next page,
\begin{figure*}
\begin{align}~\label{eq:SINRulell}
\text{SINR}_{\ul,\ell}^{\vFD}(\qb, \boldsymbol{\varsigma}, \boldsymbol{\theta}, \boldsymbol{\alpha} )=
\frac{
	\Nrx \rho_{u} \left(\sum\limits_{\substack{m\in\mathcal{M}}} \sqrt{b_m \varsigma_{\ell}} \alphml \gamuml \right)^2
	}
	{\rho_{u}
		\sum\limits_{\substack{m\in\mathcal{M}}}
		 \sum\limits_{q\in\mathcal{K}_u}\!
		\!
		b_m \varsigma_{q}
		\alphml^2
		\betamlu
		\gamma_{m\ell}^{\ul}
		+
		\rho_{d}\Ntx
		\!\sum\limits_{\substack{m\in\mathcal{M}}}
		\sum\limits_{\substack{i\in\mathcal{M}}}
		\sum\limits_{k\in\mathcal{K}_d}
		b_m \theta_{ik}^2 \alphml^2  \gamuml \beta_{mi} \gamma_{ik}^{\dl}
		+
		\sum\limits_{\substack{m\in\mathcal{M}}}		
		b_m\alphml^2\gamuml}.
  \end{align}
  	\hrulefill
	\vspace{-4mm}
  \end{figure*}
$\qb \triangleq \{b_m\}$, and $\boldsymbol{\alpha} \triangleq\{\alpha_{m\ell}\}, \forall m, k, \ell$. 
Note that, if $a_m = b_m = \alpha_{mk} = 1, \forall m,k,$ \eqref{eq:UL:SE} reduces to the UL SINR of the FD CF-mMIMO system given in~\cite[Eq. (27)]{tung19ICC}.

\subsection{Power Consumption Model}
Let $P_{\ell}$ and $P_{\mathtt{U},\ell}$ be the power consumption for transmitting signals and the required power consumption to run circuit components for the UL transmission at UL UE $\ell$.  Moreover, denote by $P_{\mathtt{D},k}$ the power consumption to run circuit components for the DL transmission at DL UE $k$;
$\Pbhm$ is the power consumed by the backhaul link between the CPU and AP $m$. Therefore, the total power consumption over the considered NAFD CF-mMIMO system is modeled as~\cite{Emil:TWC:2015:EE,Hien:TGCN:2018}
\begin{align}~\label{eq:Ptotal}
P_\mathtt{total}^{\vFD} &= \sum_{\ell\in\mathcal{K}_{u}} (P_{\ell} + P_{\mathtt{U},\ell}) + \sum_{m\in\mathcal{M}} P_m^{\vFD}  \nonumber\\
&\hspace{2em}+ \sum_{k\in\K_d} P_{\mathtt{D},k} + \sum_{m\in\mathcal{M}} \Pbhm^{\vFD},
\end{align}
where $P_m^{\vFD}$ denotes the power consumption at AP $m$ that includes the power consumption of the transceiver chains and the power consumed for the DL or UL transmission. The power consumption $P_m^{\vFD}$ can be modeled as~\cite{Emil:TWC:2015:EE,Hien:TGCN:2018}
\begin{align}~\label{eq:Pm}
\nonumber
&P_m^{\vFD} (\boldsymbol a, \boldsymbol b,\boldsymbol \theta) = 
\nonumber\\
&\hspace{0em}
\begin{cases}
  \frac{1}{\zeta_m}\rho_{d}\Sn\left(N\sum_{k\in\mathcal{K}_{d}} \gamdmk \theta_{mk}^2\right)
+N P_{\mathtt{cdl},m},
\nonumber\\
&\hspace{-4em} \text{if $a_m=1$ }
\\
  N P_{\mathtt{cul},m}, 
& \hspace{-4em} \mbox{if $b_m=1$},
\end{cases}, \forall m
\\
\nonumber
& \overset{\eqref{a}-\eqref{sumab}}{=} a_m\left[\frac{1}{\zeta_m}\rho_{d}\Sn\left(N\sum_{k\in\mathcal{K}_{d}} \gamdmk \theta_{mk}^2 \right) 
+N P_{\mathtt{cdl},m}\right]
\nonumber\\
&\hspace{4em}
+ b_m N P_{\mathtt{cul},m}
\\
& \overset{\eqref{etaa:relation}}{=} \frac{1}{\zeta_m}\rho_{d}\Sn\left(N\sum_{k\in\mathcal{K}_{d}} \gamdmk \theta_{mk}^2 \right) 
+ a_m N P_{\mathtt{cdl},m} \nonumber\\
&\hspace{4em}
+ b_m N P_{\mathtt{cul},m},
\end{align}
where $0<\zeta_m\leq 1 $ is the power amplifier efficiency at the $m$-th AP, $\Sn$ is the noise power; $P_{\mathtt{cdl},m}$ and $P_{\mathtt{cul},m}$ are the internal power required to run the circuit components (e.g., converters, mixers, and filters) related to each antenna of AP $m$ for the DL and UL transmissions, respectively. The power consumption at UL UE $\ell$ is given by
\begin{align}~\label{eq:Pl}
P_{\ell}= \frac{1}{\chi}\rho_u\Sn {\varsigma}_{\ell}, 
\end{align}
where $\chi$ is the power amplifier efficiency at UL UEs.

Let $B$ be the system bandwidth. The backhaul rate between AP $m$ and the CPU is
\begin{align}~\label{eq:Sx}
R_m^{\vFD}(\qx) &=  B \Big( a_m \sum_{\ell\in\mathcal{K}_u} \mathcal{S}_{\ul,\ell}^{\vFD} (\qb, \boldsymbol \varsigma, \boldsymbol \theta, \boldsymbol \alpha)   \nonumber\\
&\hspace{4em}
+ b_m \sum_{k\in\mathcal{K}_d}\mathcal{S}_{\dl,k}^{\vFD} (\qa, \boldsymbol \theta, {\boldsymbol{\varsigma}}) \Big),
\end{align}
where $\qx=\{\qa, \qb, \boldsymbol \varsigma, \boldsymbol \theta, \boldsymbol \alpha\}$. Denote by $P_{\mathtt{fdl},m}$ (resp. $P_{\mathtt{ful},m}$) the fixed power consumption for the DL (resp. UL) transmission of each backhaul, which is traffic-independent and may depend on the distances between the APs and the CPU and the system topology. Then, the power consumption of the backhaul signal load to each AP $m$ is proportional to the backhaul rate as \cite{Hien:TGCN:2018,bashar21TCOM}
\begin{align}~\label{eq:Pbhm}
\Pbhm^{\vFD} = a_m P_{\mathtt{fdl},m} + b_m P_{\mathtt{ful},m} + R_m^{\vFD}(\qx)
P_{\mathtt{bt},m},
\end{align}
 where $P_{\mathtt{bt},m}$ is the traffic-dependent backhaul power (in Watt per bit/s).
By substituting~\eqref{eq:Pm},~\eqref{eq:Pl}, and~\eqref{eq:Pbhm} into~\eqref{eq:Ptotal}, we have
\begin{align}~\label{eq:Ptotal:final}
P_\mathtt{total}^{\vFD} (\x)
&= 
  \sum_{m\in\mathcal{M}} \frac{N\rho_{d}\Sn}{\zeta_m}\left(\sum_{k\in\mathcal{K}_{d}} \gamdmk \theta_{mk}^2 
\right) 
\nonumber\\
&\hspace{2em}
+\sum_{\ell\in\mathcal{K}_{u}} \frac{\rho_u\Sn}{\chi} {\varsigma}_{\ell} +\PUfix + \PbhvFD \nonumber\\
& \hspace{2em} +  \sum_{m\in\mathcal{M}} a_m (N P_{\mathtt{cdl},m} + P_{\mathtt{fdl},m}) 
\nonumber\\
&\hspace{2em}
+ \sum_{m\in\mathcal{M}}  b_m (NP_{\mathtt{cul},m} + P_{\mathtt{ful},m}),
\end{align}
where $\PUfix \triangleq \sum_{k\in\K_u} P_{\mathtt{U},\ell} +  \sum_{k\in\K_d} P_{\mathtt{D},k}$ and 
\begin{align}
    \label{PbhvFD}
    \PbhvFD &\triangleq B \sum_{m\in\mathcal{M}}\Big(b_m\sum_{\ell\in\mathcal{K}_u} \mathcal{S}_{\ul,\ell}^{\vFD} (\qb, \boldsymbol \varsigma, \boldsymbol \theta, \boldsymbol \alpha)   
     \nonumber\\
     &\hspace{2em}+  a_m\sum_{k\in\mathcal{K}_d}\mathcal{S}_{\dl,k}^{\vFD} (\qa, \boldsymbol \theta, {\boldsymbol{\varsigma}})\Big)P_{\mathtt{bt},m},
\end{align}
is the total rate-dependent power consumption in backhaul links. 
\section{Spectral Efficiency Maximization}
\label{sec:SE}
\subsection{Problem Formulation}
In this subsection, we seek to optimize the UL and DL mode assignment vectors $(\aaa,\bb)$, power control coefficients $(\THeta, \VARSIGMA)$, and LSFD weight $\ALPHA$, to maximize the total SE, under the constraints on per-UE SE, transmit power at each AP and UL UE. More precisely,  we formulate an optimization problem as
\begin{subequations}\label{P:SE}
	\begin{align}
		\underset{\qx}{\max}\,\, &
		\SSS ^{\vFD} (\x)
		\\
		\mathrm{s.t.} \,\,
		\nonumber
		& \eqref{a}-\eqref{sumab}, \eqref{theta},
		\eqref{etaa:relation}, \eqref{DL:power:cons}, \eqref{UL:power:cons}, \eqref{eq:alphml} 
		\\
		& \mathcal{S}_{\ul,\ell}^{\vFD} (\qb, \boldsymbol \varsigma, {\boldsymbol{\theta}},\boldsymbol \alpha) \geq \mathcal{S}_\ul^o,~\forall \ell
		\label{UL:QoS:cons}
		\\
		&\mathcal{S}_{\dl,k}^{\vFD} (\qa, \boldsymbol \theta, {\boldsymbol{\varsigma}}) \geq  \mathcal{S}_\dl^o,~\forall k, 
		\label{DL:QoS:cons}
	\end{align}
\end{subequations}
where $\qx$ has been given after~\eqref{eq:Sx}, $\mathcal{S}_\ul^o$ and $\mathcal{S}_\dl^o$ are the minimum SE required by the $\ell$-th UL UE and $k$-th DL UE, respectively, to guarantee the QoS in the network. Moreover, the total SE of a NAFD CF-mMIMO system is defined as
\begin{align}
    \SSS ^{\vFD}(\x) \triangleq \sum_{\ell\in\mathcal{K}_u} \mathcal{S}_{\ul,\ell}^{\vFD} (\qb, \boldsymbol \varsigma, \boldsymbol \theta, \boldsymbol \alpha)   +  \sum_{k\in\mathcal{K}_d}\mathcal{S}_{\dl,k} ^{\vFD}(\qa, \boldsymbol \theta, {\boldsymbol{\varsigma}}).
\end{align} 

For the sake of algorithmic design in later sections, we first transform problem \eqref{P:SE} into a more tractable form as follows: 
\begin{subequations}\label{P:SE:equi}
\begin{align}
\underset{\qx,\qq_{\ul},\qq_{\dl}}{\min}\,\, &
- \sum_{\ell\in\mathcal{K}_u} q_{\ul,\ell}   -  \sum_{k\in\mathcal{K}_d} q_{\dl,k}  \\
\mathrm{s.t.} \,\,
\nonumber
& \eqref{a}-\eqref{sumab}, \eqref{theta},
		\eqref{etaa:relation}, \eqref{DL:power:cons}, \eqref{UL:power:cons}, \eqref{eq:alphml}
\\
& {\mathcal{S}}_{\ul,\ell}^{\vFD} (\qb, {\boldsymbol{\theta}}, \boldsymbol \varsigma, \ALPHA) \geq q_{\ul,\ell}, \forall \ell  
\label{UL:QoS:cons:1}
\\
& q_{\ul,\ell} \geq \mathcal{S}_\ul^o, \forall \ell
\label{UL:QoS:cons:2}
\\
&\mathcal{S}_{\dl,k}^{\vFD} (\qa, \boldsymbol \theta, {\boldsymbol{\varsigma}}) \geq q_{\dl,k}, \forall k 
\label{DL:QoS:cons:1} 
\\
& q_{\dl,k} \geq \mathcal{S}_\dl^o, \forall k,
\label{DL:QoS:cons:2} 
\end{align}
\end{subequations}
where $\q_{\ul} \triangleq \{q_{\ul,\ell}\}, \q_{\dl} \triangleq \{q_{\dl,k}\}$ are auxiliary variables. The problem~\eqref{P:SE:equi} is a mixed-integer nonconvex optimization problem due to the binary variables involved. Moreover, there is a strong coupling between the continuous variables ($\boldsymbol \theta, \boldsymbol \varsigma, \boldsymbol \alpha$) and binary variables ($\qa, \qb$), which makes problem~\eqref{P:SE:equi} even more complicated. In what follows, we first transform problem \eqref{P:SE:equi} into a more tractable form by exploiting the special relationship between continuous and binary variables, and use successive convex approximation techniques to solve the transformed problem efficiently.

\subsection{Solution}\label{sec:SEsolution}
Let us introduce the additional nonnegative variables ${\OOmega} \triangleq \{{\omega}_{mq}\}, \bar{\OOmega} \triangleq \{\bar{\omega}_{mq}\}, \tilde{\OOmega} \triangleq \{\tilde{\omega}_{m\ell}\}, \hat{\OOmega} \triangleq \{\hat{\omega}_{m\ell k}\}, \tilde{\ALPHA} \triangleq \{\tilde{\alpha}_{mk}\}, \hat{\ALPHA} \triangleq \{\hat{\alpha}_{mk}\}, \bar{\ETA} \triangleq \{\bar{\eta}_{j\ell}\}, \hat{\ETA} \triangleq \{\hat{\eta}_{mjk\ell}\}$, where
\begin{align}
    \label{omega}
    & \omega_{m\ell}^2 \leq  b_m \varsigma_{\ell}, \forall m,\ell
    \\
    \label{omegabar}
        & \bar{\omega}_{mq}^2 \geq  b_m \varsigma_{q}, \forall m,q
    \\
    \label{omegatilde}
    & \omega_{m\ell} \alpha_{m\ell} \geq \tilde{\omega}_{m\ell}, \forall m,\ell
    \\
    \label{omegahat}
    & \bar{\omega}_{mq} \alpha_{m\ell} \leq \hat{\omega}_{m\ell q}, \forall m,\ell,q
    \\
    \label{alphatilde}
    & \alpha_{m\ell}^2 \leq \tilde{\alpha}_{m\ell}, \forall m,\ell
    \\
    \label{alphahat}
    & b_m \tilde{\alpha}_{m\ell} \leq \hat{\alpha}_{m\ell}, \forall m,\ell
    \\
    \label{etatilde}
    & \theta_{ik}^2 \leq \bar{\eta}_{ik}, \forall i,k
    \\
    \label{etahat}
    & \hat{\alpha}_{m\ell} \bar{\eta}_{ik} \leq \hat{\eta}_{mi\ell k}, \forall m,i,\ell,k,
\end{align}
which imply that
\begin{align}
    \label{omegatilde:2}
    & \sqrt{b_m \varsigma_{\ell}} \alpha_{m\ell} \geq \tilde\omega_{m\ell}, \forall m,\ell
    \\
    \label{omegahat:2}
    & b_m \varsigma_{q} \alpha_{m\ell}^2 \leq \hat{\omega}_{m\ell q}^2, \forall m,\ell,q
    \\
    \label{alphahat:2}
    & b_m \alpha_{m\ell}^2 \leq \hat{\alpha}_{m\ell}, \forall m,\ell
    \\
    \label{etahat:2}
    & b_m \alpha_{m\ell}^2 \theta_{ik}^2 \leq \hat{\eta}_{mi\ell k}, \forall m,i,\ell,k.
\end{align}
From \eqref{eq:UL:SE}, \eqref{omegatilde:2}--\eqref{etahat:2}, we have 
\begin{align}
\nonumber
    \mathcal{S}_{\ul,\ell}^{\vFD} (\qb, \boldsymbol \varsigma, \boldsymbol \theta, \boldsymbol \alpha)  &\geq \widetilde{\mathcal{S}}_{\ul,\ell}^{\vFD} (\tilde{\OOmega}, \hat{\OOmega}, \hat{\ETA}, \hat{\ALPHA})\nonumber\\
    &\triangleq  \frac{\tau_c-\tau_t}{\tau_c}\log_2
         \bigg(1+\frac{\Psi_{\ell}^2 (\tilde{\OOmega}) }{\Phi_{\ell}(\hat{\boldsymbol \omega}, \hat{\boldsymbol{\eta}}, \hat{\boldsymbol{\alpha}})}\bigg), \forall \ell,
\end{align}
where 
\begin{align}
\nonumber
    &\Psi_{\ell} (\tilde{\OOmega}) \triangleq \sqrt{\Nrx \rho_{u}} \sum_{\substack{m\in\mathcal{M}}} \tilde{\omega}_{m\ell} \gamuml,
    \\
\nonumber
    &\Phi_{\ell}(\hat{\boldsymbol \omega}, \hat{\boldsymbol{\eta}}, \hat{\boldsymbol{\alpha}}) \triangleq \rho_{u}
               \! \sum_{\substack{m\in\mathcal{M}}}
               \! \sum_{q\in\mathcal{K}_u}
                        \hat{\omega}_{m\ell q}^2
                        \betamlu
                        \gamma_{m\ell}^{\ul}
                        \nonumber\\
    &\hspace{7em}
    +
    \rho_{d}\Ntx
            \!\sum_{\substack{m\in\mathcal{M}}}
            \!\sum_{\substack{i\in\mathcal{M}}}
            \!\sum_{k\in\mathcal{K}_d}
                \hat{\eta}_{mi\ell k} \beta_{mi}\gamuml\gamma_{ik}^{\dl}
                \nonumber\\
    &\hspace{7em}
    +
     \sum_{\substack{m\in\mathcal{M}}}
                            \hat{\alpha}_{m\ell} \gamuml.
\end{align}   
Then, constraint \eqref{UL:QoS:cons:1} can be replaced by 
\vspace{0.25em}
\begin{align}
    \label{UL:QoS:cons:1:equi}
    \widetilde{\mathcal{S}}_{\ul,\ell}^{\vFD} (\tilde{\OOmega}, \hat{\OOmega}, \hat{\ETA}, \hat{\ALPHA}) \geq q_{\ul,\ell}, \forall \ell. 
\end{align}
By invoking \eqref{DL:power:cons}, we replace constraint \eqref{etaa:relation} by
\begin{align}
\label{etaa:relation:2}
N \gamdmk \theta_{mk}^2 \leq a_{m}, \quad\forall m,k.
\end{align}
To handle the binary constraints \eqref{a} and \eqref{b}, we observe that for any real number $x$, we have $x\in\{0,1\}\Leftrightarrow x-x^2=0\Leftrightarrow (x\in[0,1]\,~\&\,~x-x^2\leq0)$ \cite{vu18TCOM}. Thus, \eqref{a} and \eqref{b} can be replaced by the following equivalent constraint:
\begin{align}
\label{C}
& C(\qa, \qb) \triangleq \sum_{m\in\mathcal{M}} (a_{m}\!-\!a_{m}^2) + \sum_{m\in \mathcal{M}} (b_{m}\!-\!b_{m}^2) \leq 0
\\
\label{abrelax}
& 0 \leq a_{m} \leq 1,~0 \leq b_{m} \leq 1, \forall m.
\end{align}

Now, problem \eqref{P:SE:equi} can be written in a more tractable form as
\begin{align}\label{P:SE:equi:3}
\underset{\widetilde{\qx}\in \mathcal{F}}{\min}\,\, &
-\sum_{\ell\in\mathcal{K}_u} q_{\ul,\ell}   -  \sum_{k\in\mathcal{K}_d} q_{\dl,k},
\end{align}
where $\widetilde{\x} \triangleq \{\x, \q_{\ul}, \q_{\dl}, \OOmega, \bar{\OOmega}, \tilde{\OOmega}, \hat{\OOmega}, \tilde{\ALPHA}, \hat{\ALPHA}, \bar{\ETA}, \hat{\ETA}\}$, $\mathcal{F}\!\triangleq \!\{\eqref{sumab}, \eqref{theta}, \eqref{DL:power:cons},  \eqref{UL:power:cons}, \eqref{eq:alphml}, \eqref{UL:QoS:cons:2} - \eqref{DL:QoS:cons:2}, \eqref{omega} - \eqref{etahat}, \eqref{UL:QoS:cons:1:equi}-\eqref{abrelax}\}$ 
is a feasible set. To this end, we consider the following problem
\begin{align}\label{P:SE:equi:relax}
\underset{\widetilde{\qx}\in \widetilde{\mathcal{F}}}{\min}\,\, &
\mathcal{L}_{\mathtt{SE}}(\widetilde{\qx}),
\end{align}
where $\mathcal{L}_{\mathtt{SE}}(\widetilde{\qx})\triangleq -\sum_{\ell\in\mathcal{K}_u} q_{\ul,\ell}   -  \sum_{k\in\mathcal{K}_d} q_{\dl,k} + \lambda C(\qa, \qb)$ is the Lagrangian of \eqref{P:SE:equi:3} and $\lambda$ is the Lagrangian multiplier corresponding to constraint \eqref{C}. Here, $\widetilde{\mathcal{F}}\triangleq \mathcal{F}\setminus \{\eqref{C}\}$.

\begin{proposition}
\label{proposition-dual}
The values $C_{\lambda}$ of $C$ at the solution of \eqref{P:SE:equi:relax} corresponding to $\lambda$ converge to $0$ as $\lambda \rightarrow +\infty$. Also, problem \eqref{P:SE:equi:3} has strong duality, i.e.,
\begin{equation}\label{Strong:Dualitly:hold}
\underset{\widetilde{\qx}\in\mathcal{F}}{\min}\,\,
-\sum_{\ell\in\mathcal{K}_u} q_{\ul,\ell}   -  \sum_{k\in\mathcal{K}_d} q_{\dl,k}
=
\underset{\lambda\geq0}{\sup}\,\,
\underset{\widetilde{\qx}\in\widetilde{\mathcal{F}}}{\min}\,\,
\mathcal{L}_{\mathtt{SE}}(\widetilde{\qx}).
\end{equation}
Then, problem \eqref{P:SE:equi:3} is equivalent to problem \eqref{P:SE:equi:relax} at the optimal solution $\lambda^* \geq0$ of the sup-min problem in \eqref{Strong:Dualitly:hold}.
\end{proposition}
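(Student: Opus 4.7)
My plan is to exploit the key observation that $C(\qa,\qb)\geq 0$ on the relaxed set $\widetilde{\mathcal{F}}$, since $a_m-a_m^2=a_m(1-a_m)\geq 0$ for $a_m\in[0,1]$ and similarly for $b_m$, with equality if and only if each $a_m,b_m\in\{0,1\}$. Hence $C=0$ characterizes exactly the binary constraints \eqref{a}--\eqref{b}, so that $\mathcal{F}=\widetilde{\mathcal{F}}\cap\{C\leq 0\}=\widetilde{\mathcal{F}}\cap\{C=0\}$. A second preliminary observation I will need is that $\widetilde{\mathcal{F}}$ is compact: every continuous variable group is bounded by the explicit box/power constraints \eqref{theta}, \eqref{DL:power:cons}, \eqref{UL:power:cons}, \eqref{eq:alphml}, \eqref{abrelax} together with the slack-defining inequalities \eqref{omega}--\eqref{etahat}, while the auxiliary variables $\q_{\ul},\q_{\dl}$ are bounded below by $\mathcal{S}_\ul^o,\mathcal{S}_\dl^o$ and above by the (bounded) SE functionals; all defining functions are continuous, so $\widetilde{\mathcal{F}}$ is closed and bounded.

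For the first claim, I would run a standard exact-penalty argument. Let $\widetilde{\x}_\lambda$ denote any minimizer of \eqref{P:SE:equi:relax} and pick an arbitrary reference point $\widetilde{\x}_0\in\mathcal{F}$, which satisfies $C(\widetilde{\x}_0)=0$. Writing the original objective as $f(\widetilde{\x})\triangleq -\sum_\ell q_{\ul,\ell}-\sum_k q_{\dl,k}$, optimality of $\widetilde{\x}_\lambda$ for \eqref{P:SE:equi:relax} gives $f(\widetilde{\x}_\lambda)+\lambda C(\widetilde{\x}_\lambda)\leq f(\widetilde{\x}_0)$, i.e., $\lambda C(\widetilde{\x}_\lambda)\leq f(\widetilde{\x}_0)-f(\widetilde{\x}_\lambda)$. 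Since $f$ is continuous on the compact set $\widetilde{\mathcal{F}}$, the right-hand side is bounded above by a constant independent of $\lambda$, and because $C(\widetilde{\x}_\lambda)\geq 0$ this forces $C_\lambda = C(\widetilde{\x}_\lambda)\to 0$ as $\lambda\to\infty$.

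For the strong-duality claim, let $p^*$ denote the primal optimum of \eqref{P:SE:equi:3} and $d^*\triangleq\sup_{\lambda\geq 0}g(\lambda)$ with $g(\lambda)\triangleq\min_{\widetilde{\x}\in\widetilde{\mathcal{F}}}\mathcal{L}_{\mathtt{SE}}(\widetilde{\x})$. Weak duality $d^*\leq p^*$ is immediate since for any $\widetilde{\x}_0\in\mathcal{F}$ one has $g(\lambda)\leq\mathcal{L}_{\mathtt{SE}}(\widetilde{\x}_0)=f(\widetilde{\x}_0)$ and the right-hand side may be minimized over $\mathcal{F}$. For the reverse inequality $d^*\geq p^*$, I would use compactness of $\widetilde{\mathcal{F}}$ to extract a subsequence $\widetilde{\x}_{\lambda_n}\to\widetilde{\x}^*\in\widetilde{\mathcal{F}}$ along some $\lambda_n\to\infty$; continuity of $C$ together with the first claim yields $C(\widetilde{\x}^*)=0$, hence $\widetilde{\x}^*\in\mathcal{F}$ and $f(\widetilde{\x}^*)\geq p^*$. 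Since $C\geq 0$ on $\widetilde{\mathcal{F}}$, $g(\lambda_n)=f(\widetilde{\x}_{\lambda_n})+\lambda_n C(\widetilde{\x}_{\lambda_n})\geq f(\widetilde{\x}_{\lambda_n})\to f(\widetilde{\x}^*)\geq p^*$, so $d^*\geq p^*$ and \eqref{Strong:Dualitly:hold} follows; the asserted equivalence at $\lambda^*$ is then a direct consequence.

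The step I expect to require the most care is the compactness/boundedness verification for $\widetilde{\mathcal{F}}$ that underpins the subsequence extraction, because $\widetilde{\x}$ contains numerous slack families $(\OOmega,\bar{\OOmega},\tilde{\OOmega},\hat{\OOmega},\tilde{\ALPHA},\hat{\ALPHA},\bar{\ETA},\hat{\ETA})$ and one must chain through constraints \eqref{omega}--\eqref{etahat} to confirm each slack is pinned within finite bounds; once this is in place, both the penalty convergence and the duality-gap closure follow from the same two-line inequality, and the proof goes through cleanly.
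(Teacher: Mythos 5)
Your overall route---noting that $C\geq 0$ on the relaxed set with equality exactly at binary points, bounding $\lambda C(\widetilde{\qx}_{\lambda})\leq f(\widetilde{\qx}_0)-f(\widetilde{\qx}_{\lambda})$ via a reference point $\widetilde{\qx}_0\in\mathcal{F}$, weak duality, and a limiting argument along $\lambda_n\to+\infty$ to close the duality gap---is exactly the standard exact-penalty argument that the paper itself invokes (its proof is omitted and deferred to \cite[Proposition~1]{vu18TCOM}, which proceeds this way). However, the step you single out as critical is precisely where your write-up breaks: $\widetilde{\mathcal{F}}$ is \emph{not} compact. The constraints \eqref{omegabar}, \eqref{omegahat}, \eqref{alphatilde}, \eqref{alphahat}, \eqref{etatilde}, \eqref{etahat} only bound $\bar{\omega}_{mq},\hat{\omega}_{m\ell q},\tilde{\alpha}_{m\ell},\hat{\alpha}_{m\ell},\bar{\eta}_{ik},\hat{\eta}_{mi\ell k}$ from \emph{below}; nothing in $\widetilde{\mathcal{F}}$ caps them from above, so ``chaining through \eqref{omega}--\eqref{etahat}'' cannot pin these slacks in finite boxes. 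As written, this undermines both the subsequence extraction in your reverse-duality step and the boundedness of $f$ (hence existence of minimizers of \eqref{P:SE:equi:relax}) that you deduce from compactness.

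The gap is repairable with ingredients you already have. First, boundedness of $f$ on $\widetilde{\mathcal{F}}$ does not need compactness: from \eqref{omega}, \eqref{omegatilde}, \eqref{alphahat:2}, \eqref{UL:power:cons}, \eqref{eq:alphml} one gets $\tilde{\omega}_{m\ell}\leq\sqrt{b_m\varsigma_{\ell}}\,\alpha_{m\ell}$ and $\Phi_{\ell}\geq\sum_{m}b_m\alpha_{m\ell}^2\gamuml$, so by Cauchy--Schwarz $\Psi_{\ell}^2/\Phi_{\ell}\leq N\rho_u\sum_m\gamuml$, while $\Omega_k\geq 1$ and \eqref{DL:power:cons} cap the DL ratio; hence $q_{\ul,\ell},q_{\dl,k}$ are bounded above on the feasible set via \eqref{UL:QoS:cons:1:equi} and \eqref{DL:QoS:cons:1}. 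Second, for the limit argument observe that the objective and $C$ depend only on coordinates that \emph{are} boxed ($\qa,\qb,\boldsymbol{\theta},\boldsymbol{\varsigma},\boldsymbol{\alpha},\omega,\tilde{\omega},\q_{\ul},\q_{\dl}$), and that $\widetilde{\mathcal{S}}_{\ul,\ell}^{\vFD}$ is nonincreasing in the unbounded slacks; therefore every minimizer of \eqref{P:SE:equi:relax} can be replaced, without changing feasibility or the objective, by the ``purified'' point with $\bar{\omega}_{mq}=\sqrt{b_m\varsigma_q}$, $\hat{\omega}_{m\ell q}=\bar{\omega}_{mq}\alpha_{m\ell}$, $\tilde{\alpha}_{m\ell}=\alpha_{m\ell}^2$, $\hat{\alpha}_{m\ell}=b_m\tilde{\alpha}_{m\ell}$, $\bar{\eta}_{ik}=\theta_{ik}^2$, $\hat{\eta}_{mi\ell k}=\hat{\alpha}_{m\ell}\bar{\eta}_{ik}$. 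These purified minimizers live in a compact set and the minimal slacks are continuous functions of the boxed coordinates, so your extracted limit lies in $\widetilde{\mathcal{F}}$ with $C=0$, i.e., in $\mathcal{F}$, and your two-line inequality then delivers \eqref{Strong:Dualitly:hold}. One further point to flag (which the paper also leaves implicit): the final equivalence presumes the supremum in \eqref{Strong:Dualitly:hold} is attained at some finite $\lambda^*$, which neither you nor the omitted proof establishes.
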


\begin{proof}
    The proof has a similar procedure as the proof of \cite[Proposition 1]{vu18TCOM}, and hence, omitted.
\end{proof}
Note that it is theoretically required to have $C_{\lambda}=0$ in order to obtain the optimal solution to problem \eqref{P:SE:equi:3}. According to Proposition~\ref{proposition-dual}, $C_{\lambda}$ converges to $0$ as $\lambda\to+\infty$. For practical implementation, it is  acceptable for $C_{\lambda}$ to be sufficiently small with a sufficiently large value of $\lambda$. In our numerical experiments, for $\varepsilon = 5 \times 10^{-5}$, we see that $\lambda=1$ is enough to ensure that $C_{\lambda}/(MK) \leq \varepsilon$. This way of selecting $\lambda$ has been widely used in the literature, e.g., see \cite{vu18TCOM} and references therein.

Problem \eqref{P:SE:equi:relax} is still difficult to solve due to the non-convex constraints \eqref{DL:QoS:cons:1} and \eqref{UL:QoS:cons:1:equi}. To deal with constraint \eqref{DL:QoS:cons:1}, we observe that  
\begin{align}
\log \bigg(1 + \frac{x^2}{y}\bigg)  &\geq  \log
             \bigg(1+\frac{(x^{(n)})^2}{y^{(n)}}\bigg) -
            \frac{(x^{(n)})^2}{y^{(n)}}
            \nonumber\\
&\hspace{0em}
           		 + 2\frac{x^{(n)}x}{y^{(n)}}
           		 - \frac{(x^{(n)})^2(x^2
           		 + y)}{y^{(n)}((x^{(n)})^2
           		 +y^{(n)})}, 
\end{align}
where $x > 0, y > 0$ \cite[Eq. (40)]{vu20TWC}.
Therefore, $\mathcal{S}_{\dl,k}^{\vFD}  (\qa, \boldsymbol \theta, {\boldsymbol{\varsigma}})$ has a concave lower bound $\widehat{\mathcal{S}}_{\dl,k}^{\vFD}  (\qa, \boldsymbol \theta, {\boldsymbol{\varsigma}})$ that is given as
\begin{align}
    \widehat{\mathcal{S}}_{\dl,k}^{\vFD}  (\boldsymbol \theta, {\boldsymbol{\varsigma}})
     &\triangleq
     \frac{\tau_c - \tau_{t}}{\tau_c\log 2} 
     \Bigg[
      \log\bigg(1\!+\!
                  \frac{(\Xi_k^{(n)})^2}{\Omega_k^{(n)}}\bigg)
                  \!-\!
                  \frac{(\Xi_k^{(n)})^2}{\Omega_k^{(n)}}
                  \nonumber\\
&\hspace{0em}
                  \!+\!
                  2\frac{\Xi_k^{(n)}\Xi_k}{\Omega_k^{(n)}}
                   \!-\!
                   \frac{(\Xi_k^{(n)})^2(\Xi_k^2 + \Omega_k)}{\Omega_k^{(n)}((\Xi_k^{(n)})^2
                   \!+\!
                   \Omega_k^{(n)})}
       \Bigg],
\end{align}
where $\Xi_k$ and $\Omega_k$ are defined in \eqref{eq:DL:SE}.
Then, constraint \eqref{DL:QoS:cons:1}  is approximated by the following convex constraint
\begin{align}
    \label{DL:QoS:cons:approx}
    \widehat{\mathcal{S}}_{\dl,k}^{\vFD} (\boldsymbol \theta,  {\boldsymbol{\varsigma}}) \geq q_{\dl,k}, \forall k.
\end{align}
Similarly, to deal with constraint \eqref{UL:QoS:cons:1:equi}, we see that the concave lower bound of $\widetilde{\mathcal{S}}_{\ul,\ell} ^{\vFD}  (\tilde{\OOmega}, \hat{\OOmega}, \hat{\ETA}, \hat{\ALPHA})$ is given by 
\begin{align}
    \widehat{\mathcal{S}}_{\ul,\ell} ^{\vFD} 
    (\tilde{\OOmega}, \hat{\OOmega}, \hat{\ETA}, \hat{\ALPHA})
    &\triangleq
    \frac{\tau_c \!-\! \tau_{t}}{\tau_c\log 2}
    \Bigg[
          \log
             \bigg(1 \!+\! \frac{(\Psi_{\ell}^{(n)})^2}{\Phi_{\ell}^{(n)}}\bigg) 
             \nonumber\\
&\hspace{-7em}
-
            \frac{(\Psi_{\ell}^{(n)})^2}{\Phi_{\ell}^{(n)}}
           		 + 2\frac{\Psi_{\ell}^{(n)}\Psi_{\ell}}{\Phi_{\ell}^{(n)}}
           		 - \frac{(\Psi_{\ell}^{(n)})^2(\Psi_{\ell}^2
           		 + \Phi_{\ell})}{\Phi_{\ell}^{(n)}((\Psi_{\ell}^{(n)})^2
           		 +\Phi_{\ell}^{(n)})}
     \Bigg].
\end{align}
Then, constraint \eqref{UL:QoS:cons:1:equi} is then approximated by the following convex constraint
\begin{align}
    \label{UL:QoS:cons:approx}
    \widehat{\mathcal{S}}_{\ul,\ell}^{\vFD} (\tilde{\OOmega}, \hat{\OOmega}, \hat{\ETA}, \hat{\ALPHA}) \geq q_{\ul,\ell}, \forall \ell.
\end{align}

By invoking the following lower bounds~\cite{vu20TWC} 
\begin{align}
    \label{xy:ub}
    & xy \leq 0.25 [(x+y)^2-2(x^{(n)}-y^{(n)})(x-y) 
    \nonumber\\
&\hspace{2em}+ (x^{(n)}-y^{(n)})^2]
    \\
    \label{minusxy:ub}
    & -xy \leq 0.25 [(x-y)^2-2(x^{(n)}\nonumber\\
&\hspace{2em}+\!y^{(n)})(x+y)
    + (x^{(n)}+y^{(n)})^2],
\end{align}
where $\forall x\geq0, y\geq0$, the convex upper bound of $C(\qa, \qb)$ is given by
\begin{align}
    &\widetilde{C}(\qa, \qb) \triangleq \sum_{m\in\mathcal{M}} \left[a_{m}-2a_{m}^{(n)}a_{m} + (a_{m}^{(n)})^2\right] \nonumber\\
&\hspace{5em}+ \sum_{m\in \mathcal{M}}\left[b_{m}-2b_{m}^{(n)}b_{m} + (b_{m}^{(n)})^2\right].
\end{align}
Similarly, constraints \eqref{omega}--\eqref{omegahat}, \eqref{alphahat}, and \eqref{etahat} can be approximated by the following convex constraints
\begin{align}
    \label{omega:1:convex}
    & \omega_{m\ell}^2 + 0.25 [(b_m-\varsigma_{\ell})^2-2(b_m^{(n)}
        +\varsigma_{\ell}^{(n)})(b_m+\varsigma_{\ell})
\nonumber\\
    &\hspace{3em}
    + (b_m^{(n)}+\varsigma_{\ell}^{(n)})^2]
    \leq  0, \forall m,\ell
    \\
    & \!\!\!\! 0.25 [(b_m \!+\! \varsigma_{q})^2 -2(b_m^{(n)}-\varsigma_{q}^{(n)})(b_m-\varsigma_{q}) + (b_m^{(n)}-\varsigma_{q}^{(n)})^2] 
    \nonumber\\
    &\hspace{3em}
    - 2\bar{\omega}_{m{q}}^{(n)}\bar{\omega}_{m{q}} + (\bar{\omega}_{m{q}}^{(n)})^2\leq 0, \forall m,q
    \\
    \label{omegatilde:convex}
    & \!\!\!\! \tilde{\omega}_{m\ell} + 0.25 [(\omega_{m\ell}-\alpha_{m\ell})^2 - 2(\omega_{m\ell}^{(n)}+\alpha_{m\ell}^{(n)})(\omega_{m\ell}+\alpha_{m\ell})
    \nonumber\\
    &\hspace{3em}
    + (\omega_{m\ell}^{(n)}+\alpha_{m\ell}^{(n)})^2] \leq 0, \forall m,\ell
    \\
    \label{omegahat:convex}
    & \!\!\!\! 0.25 [(\bar{\omega}_{mq}+\alpha_{m\ell})^2-2(\bar{\omega}_{mq}^{(n)}-\alpha_{m\ell}^{(n)})(\bar{\omega}_{mq}-\alpha_{m\ell}) 
    \nonumber\\
    &\hspace{3em}
    + (\bar{\omega}_{mq}^{(n)}-\alpha_{m\ell}^{(n)})^2]
     - \hat{\omega}_{m\ell q} \leq 0, \forall m,\ell,q
    \\
    \label{alphahat:convex}
    & \!\!\!\! 0.25 [(b_m+\tilde{\alpha}_{m\ell})^2-2(b_m^{(n)}-\tilde{\alpha}_{m\ell}^{(n)})(b_m-\tilde{\alpha}_{m\ell}) 
    \nonumber\\
    &\hspace{3em}
    + (b_m^{(n)}-\tilde{\alpha}_{m\ell}^{(n)})^2]
    - \hat{\alpha}_{m\ell} \leq 0, \forall m,\ell
    \\
    \label{etahat:convex}
    & \!\!\!\! 0.25 [(\hat{\alpha}_{m\ell}+\bar{\eta}_{ik})^2-2(\hat{\alpha}_{m\ell}^{(n)}-\bar\eta_{ik}^{(n)})(\hat{\alpha}_{mk}-\bar\eta_{ik}) 
    \nonumber\\
    &\hspace{3em}
    + (\hat{\alpha}_{m\ell}^{(n)}-\bar\eta_{ik}^{(n)})^2]
     - \hat{\eta}_{mi\ell k} \leq 0, \forall m,i,\ell,k.
\end{align}

At iteration $(n+1)$, for a given point $\widetilde{\qx}^{(n)}$, problem \eqref{P:SE:equi:relax} can finally be approximated by the following convex problem:
\begin{align}
\label{P:SE:equi:relax:approx}
\underset{\widetilde{\qx}\in\widehat{\mathcal{F}}}{\min} \,\,
& \widehat{\mathcal{L}}_{\mathtt{SE}^{\vFD}}(\widetilde{\qx}),
\end{align}
where $\widehat{\mathcal{L}}_{\mathtt{SE}^{\vFD}}(\widetilde{\qx}) =- \sum_{\ell\in\mathcal{K}_u} q_{\ul,\ell}   -  \sum_{k\in\mathcal{K}_d} q_{\dl,k} + \lambda\widetilde{C}(\qa, \qb)$, $\widehat{\mathcal{F}} \triangleq \{\widetilde{\mathcal{F}}, \eqref{DL:QoS:cons:approx}, \eqref{UL:QoS:cons:approx},  \eqref{omega:1:convex}-\eqref{etahat:convex}\} \setminus \{\eqref{DL:QoS:cons:1}, \eqref{omega}\\-\eqref{omegahat}, \eqref{alphahat}, \eqref{etahat}, \eqref{UL:QoS:cons:1:equi}\}$ is a convex feasible set. In Algorithm~\ref{alg}, we outline the main steps to solve problem \eqref{P:SE:equi:3}.
Starting from a random point $\widetilde{\qx}\in\widehat{\mathcal{F}}$, we solve \eqref{P:SE:equi:relax:approx} to obtain its optimal solution $\widetilde{\qx}^*$, and use $\widetilde{\qx}^*$ as an initial point in the next iteration. The algorithm terminates when an accuracy level is reached. Algorithm~\ref{alg} will converge to a stationary point, i.e., a Fritz John solution, of problem \eqref{P:SE:equi:relax} (hence \eqref{P:SE:equi:3} or \eqref{P:SE}). The proof of this convergence property uses similar steps in the proof of \cite[Proposition 2]{vu18TCOM}, and hence, is omitted due to lack of space. 

Algorithm~\ref{alg} requires solving a series of convex problems \eqref{P:SE:equi:relax:approx}. For ease of presentation, if we let $K_d = K_u = K$, problem \eqref{P:SE:equi:relax:approx}  can be transformed to an equivalent problem that involves $A_v\triangleq 2M + 3K + 9MK + MK^2 + M^2K^2$ real-valued scalar variables, $A_l\triangleq 5M + 4K + 9MK + MK^2 + M^2K^2$ linear constraints, $A_q\triangleq M + 2K + 7MK + MK^2 + M^2K^2$ quadratic constraints. Therefore, the algorithm for solving problem \eqref{P:SE:equi:relax} requires a complexity of $\OO(\sqrt{A_l+A_q}(A_v+A_l+A_q)A_v^2)$.

\begin{algorithm}[!t]
\caption{Solving problem \eqref{P:SE:equi:relax}}
\begin{algorithmic}[1]
\label{alg}
\STATE \textbf{Initialize}: $n\!=\!0$, 
$\lambda > 1$, a random point $\widetilde{\qx}^{(0)}\!\in\!\widehat{\mathcal{F}}$.
\REPEAT
\STATE Update $n=n+1$
\STATE Solve \eqref{P:SE:equi:relax:approx} to obtain its optimal solution $\widetilde{\qx}^*$
\STATE Update $\widetilde{\qx}^{(n)}=\widetilde{\qx}^*$
\UNTIL{convergence}
\end{algorithmic}
\vspace{0.5em}
\end{algorithm}
\begin{remark}[Initial point and infeasible SE maximization problem] \label{remark:SEfeasible}
Our Algorithm~\ref{alg} needs a feasible point $\widetilde{\x} \in \widehat{\FF}$ to start. It is easy to find a point $\widetilde{\x} \in \widehat{\FF} \setminus \{\eqref{UL:QoS:cons:2}, \eqref{DL:QoS:cons:2}\}$ by using a random procedure and letting the constraints in $\widehat{\FF} \setminus \{\eqref{UL:QoS:cons:2}, \eqref{DL:QoS:cons:2}\}$ happen with equality. However, when the minimum individual SEs required for QoS, i.e., $\mathcal{S}_\ul^o$ and $\mathcal{S}_\dl^o$, are large but the UEs have unfavourable links to the APs, the QoS constraints \eqref{UL:QoS:cons:2} and \eqref{DL:QoS:cons:2} are not easy to satisfy. In this case, we start with a random point $\widetilde{\x} \in \widehat{\FF} \setminus \{\eqref{UL:QoS:cons:2}, \eqref{DL:QoS:cons:2}\}$ and solve the following problem (instead of problem \eqref{P:SE:equi:relax:approx}) in each iteration of Algorithm~\ref{alg}
\begin{subequations}\label{P:SE:equi:relax:approx:s}
\begin{align}
\underset{\widetilde{\qx}\in\widehat{\mathcal{F}}\setminus \{\eqref{UL:QoS:cons:2}, \eqref{DL:QoS:cons:2}\}, \z_{\ul}, \z_{\dl}}{\min} \,\,
& \widehat{\mathcal{L}}_{\mathtt{SE}^{\vFD}}(\widetilde{\qx}) + \phi z,
\\ 
\mathrm{s.t.} \,\,\,\,\,\,\,\,\,\,\,\,\,\,\,\,\,\,\,\,
\label{qul:s}
& q_{\ul,\ell} + z_{\ul,\ell} \geq \mathcal{S}_\ul^o, \forall \ell
\\
\label{qdl:s}
& q_{\dl,k} + z_{\dl,k} \geq \mathcal{S}_\dl^o, \forall k
\\
\label{s:cons}
& z_{\ul,\ell} \geq 0, z_{\dl,k} \geq 0, \forall k,\ell,
\end{align}
\end{subequations}
where $z \triangleq \sum_{\ell\in\K_u} z_{\ul,\ell} + \sum_{k\in\K_d} z_{\dl,k}$ and $\phi > 0$ is a penalty parameter. Here, $\z_{\ul} \triangleq \{z_{\ul,\ell}\}, \z_{\dl} \triangleq \{z_{\dl,k}\}$ are additional variables that makes constraints \eqref{UL:QoS:cons:2}, \eqref{DL:QoS:cons:2} satisfied if they are sufficiently small. Since $\z_{\ul},\z_{\dl}$ are nonnegative and \eqref{P:SE:equi:relax:approx:s} is a minimization problem, $\z_{\ul},\z_{\dl}$ are forced to approach $0$ during the iterative process of Algorithm~\ref{alg}. When Algorithm~\ref{alg} converges, if  $z$ is smaller than a predefined error threshold, the problem \eqref{P:SE:equi:3} or \eqref{P:SE} is feasible with constraints \eqref{UL:QoS:cons:2}, \eqref{DL:QoS:cons:2} satisfied, and we take the converged point as the final solution. Otherwise, problem \eqref{P:SE:equi:3} or \eqref{P:SE} is considered as an infeasible problem. 
\end{remark}

\section{Energy Efficiency Maximization}
\label{sec:EE}
\subsection{Problem Formulation}
In this subsection, we aim at optimizing the mode assignment of the APs $(\aaa,\bb)$, power coefficients $(\THeta, \VARSIGMA)$, and LSFD weight $\ALPHA$ to maximize the total EE, under the constraints on QoS requirements for each UE, maximum transmit power at each DL AP and each UL UE. The total EE (in bit/Joule) is defined as the sum throughput (bit/s) divided by the total power consumption (Watt) in the network
\begin{align}~\label{eq:EE:def}
   \mathtt{EE}^{\vFD}(\x) =\frac{B.\mathcal{S}^{\vFD}(\qx)}{\frac{\tau_c-\tau_t}{\tau_c}P_\mathtt{total}^{\vFD} (\x) },
\end{align}
where $\qx$ is defined in~\eqref{eq:Sx}. More precisely, the optimization problem is formulated
as follows:
\begin{subequations}
\begin{align}\label{P:EE}
\underset{\qx}{\max}\,\, &\hspace{1em}
	\mathtt{EE}^{\vFD} (\x)
	  \\
	\mathrm{s.t.} \,\,
	&\hspace{1em} \eqref{a}-\eqref{sumab}, \eqref{theta},
	\eqref{etaa:relation}, \eqref{DL:power:cons}, \eqref{UL:power:cons}, \eqref{eq:alphml}, \eqref{UL:QoS:cons}, \eqref{DL:QoS:cons}.
\end{align}    
\end{subequations}

Problem~\eqref{P:EE} is also a nonconvex mixed-integer problem. However, it has a tight coupling of the AP mode assignment variables ($\qa, \qb$) and the power consumption of backhaul signalling loads, which is not the case in the SE maximization problem~\eqref{P:SE}. On one hand, this is the issue that makes the mathematical structure of problem~\eqref{P:EE} significantly different from that of problem~\eqref{P:SE}. Thus, we cannot apply straightforwardly the proposed Algorithm~\ref{alg} to solve problem~\eqref{P:EE}. On the other hand, this issue makes problem~\eqref{P:EE} technically much more challenging than problem~\eqref{P:SE} and difficult to find its optimal solution. Therefore, instead of finding the optimal solution to the EE problem, we aim to find its suboptimal solution. 

First, we see that by the definition of $\PbhvFD$ in \eqref{PbhvFD}, we have
\begin{align}
    \label{Pbhfull}
    \PbhvFD < \PbhfulvFD \triangleq B\sum_{m\in\MM} \SSS^{\vFD} (\x) P_{\mathtt{bt},m},
\end{align}
which is the rate-dependent power consumption when each AP shares with the CPU full backhaul signaling loads for all the DL and UL UEs. Then, $\PtotvFD(\x)$ is always smaller than 
\begin{align}
    \label{eq:Ptotal:upper}
    \PtotbhvFD (\x)
    & \triangleq 
      \sum_{m\in\mathcal{M}} \frac{N\rho_{d}\Sn}{\zeta_m}\left(\sum_{k\in\mathcal{K}_{d}} \gamdmk \theta_{mk}^2 
    \right) 
    \nonumber\\
    &\hspace{2em}
    +\sum_{\ell\in\mathcal{K}_{u}} \frac{\rho_u\Sn}{\chi} {\varsigma}_{\ell} +\PUfix 
    + \PbhfulvFD
    \nonumber\\
    & \hspace{2em} 
    +  \sum_{m\in\mathcal{M}} a_m (N P_{\mathtt{cdl},m} + P_{\mathtt{fdl},m}) 
    \nonumber\\
    &\hspace{2em}
    + \sum_{m\in\mathcal{M}}  b_m (NP_{\mathtt{cul},m} + P_{\mathtt{ful},m}).
\end{align}
Therefore, we have
\begin{align}
    \EEEE_{fullbh}^{\vFD} (\x) \triangleq \frac{B.\mathcal{S}^{\vFD}(\qx)}{\frac{\tau_c-\tau_t}{\tau_c}\PtotbhvFD (\x) } < \EEEE^{\vFD}(\x).
\end{align}
Now, instead of solving problem~\eqref{P:EE}, we aim to solve the following problem
\begin{subequations}
\begin{align}\label{P:EE:fullbh}
\underset{\qx}{\max}\,\, &\hspace{1em}
	\mathtt{EE}_{fullbh}^{\vFD} (\x)
	  \\
	\mathrm{s.t.} \,\,
	&\hspace{1em} \eqref{a}-\eqref{sumab}, \eqref{theta},
	\eqref{etaa:relation}, \eqref{DL:power:cons}, \eqref{UL:power:cons}, \eqref{eq:alphml}, \eqref{UL:QoS:cons}, \eqref{DL:QoS:cons}.
\end{align}    
\end{subequations}
Note that the solution to problem~\eqref{P:EE:fullbh} is not the optimal solution to problem~\eqref{P:EE} but can be sufficiently close to this solution, which is shown later in the simulation results of Section~\ref{Sec:Numer}. 
For problem~\eqref{P:EE:fullbh}, we observe that 
\begin{align}
    \label{EE:1}
    \mathtt{EE}_{fullbh}^{\vFD}&=\frac{B.\mathcal{S}^{\vFD}(\qx)}{ \frac{\tau_c-\tau_t}{\tau_c} 
    \PtotbhvFD (\x)} =\frac{1}{\frac{\tau_c-\tau_t}{\tau_c} \frac{\PtotbhvFD(\x)}{B.\mathcal{S}^{\vFD}(\qx)}}
    \nonumber\\
&\hspace{0em}
    = \frac{1}{ \frac{\tau_c-\tau_t}{\tau_c} \left( \frac{\widetilde{P}(\x)}{B.\mathcal{S}^{\vFD}(\qx)} 
    + \sum_{m\in\mathcal{M}}P_{\mathtt{bt},m} \right)},
\end{align}
where 
\begin{align}
    \nonumber
    \widetilde{P}(\x) \triangleq
    &\sum_{m\in\mathcal{M}} \frac{N\rho_{d}\Sn}{\zeta_m}\left(\sum_{k\in\mathcal{K}_{d}} \gamdmk \theta_{mk}^2 
    \right) +\sum_{\ell\in\mathcal{K}_{u}} \frac{\rho_u\Sn}{\chi} {\varsigma}_{\ell} 
    \nonumber\\
   &\hspace{0em}+\PUfix 
    +  \sum_{m\in\mathcal{M}} a_m (N P_{\mathtt{cdl},m} + P_{\mathtt{fdl},m}) 
    \nonumber\\
&\hspace{0em}
  + \sum_{m\in\mathcal{M}}  b_m (NP_{\mathtt{cul},m} + P_{\mathtt{ful},m}).
\end{align}
By invoking~\eqref{EE:1}, problem~\eqref{P:EE:fullbh} can be rewritten as
\begin{subequations}
\begin{align}\label{P:EE:fullbh:1}
\underset{\qx}{\max}\,\, &\hspace{1em}
	\frac{B.\mathcal{S}^{\vFD}(\qx)}{\widetilde{P}(\x)}
	  \\
	\mathrm{s.t.} \,\,
	&\hspace{1em} \eqref{a}-\eqref{sumab}, \eqref{theta},
	\eqref{etaa:relation}, \eqref{DL:power:cons}, \eqref{UL:power:cons}, \eqref{eq:alphml}, \eqref{UL:QoS:cons}, \eqref{DL:QoS:cons}.
\end{align}    
\end{subequations}
Problem~\eqref{P:EE:fullbh:1} is then equivalent to 
\begin{subequations}
\label{P:EE:fullbh:2}
	\begin{align}
	\underset{\x, t, \hat{p}}{\min}\,\, &
		- t
		  \\
		\mathrm{s.t.} \,\,
		&\hspace{1em} \eqref{a}-\eqref{sumab}, \eqref{theta},
		\eqref{etaa:relation}, \eqref{DL:power:cons}, \eqref{UL:power:cons}, \eqref{eq:alphml}, \eqref{UL:QoS:cons:1}-\eqref{DL:QoS:cons:2}
		\\
		\label{tphat}
		&\hspace{1em} t \hat{p} \leq  \sum_{\ell\in\mathcal{K}_u} B q_{\ul,\ell}   +  \sum_{k\in\mathcal{K}_d} B q_{\dl,k} 
		\\
		\label{phat}
		&\hspace{1em} \hat{p} \geq \widetilde{P} (\x),
	\end{align}
\end{subequations}
where $t$ and $\hat{p}$ are additional variables. In the following, we transform problem \eqref{P:EE:fullbh:2} into a more tractable form which is then solved by successive convex approximation techniques. 

\subsection{Solution}
Using similar steps to transform the problem of maximizing the sum SE into a more tractable one as discussed in Section~\ref{sec:SEsolution}, problem \eqref{P:EE:fullbh:2} can be rewritten as 
\begin{align}\label{P:EE:fullbh:3}
\underset{\widehat{\x}\in \mathcal{H}}{\min}\,\, &
-t,
\end{align}
where $\mathcal{H} \! \triangleq \! \{\eqref{a}-\eqref{sumab}, \eqref{theta},  \eqref{etaa:relation}, \eqref{DL:power:cons}, \eqref{UL:power:cons}, \eqref{eq:alphml}, \eqref{UL:QoS:cons:2} - \eqref{DL:QoS:cons:2}, \eqref{omega} - \eqref{etahat}, \eqref{UL:QoS:cons:1:equi}-\eqref{abrelax},\eqref{tphat}, \eqref{phat}\}$.
Now, we consider the following problem
\begin{align}\label{P:EE:fullbh:3:relax}
\underset{\widehat{\x}\in \HHH}{\min}\,\, &
\mathcal{L}_{\mathtt{EE}^{\vFD}}(\widehat{\x}),
\end{align}
where $\mathcal{L}_{\mathtt{EE}^{\vFD}}(\widehat{\x})\triangleq -t + \lambda C(\qa, \qb)$ is the Lagrangian of \eqref{P:EE:fullbh:3} and $\lambda$ is the Lagrangian multiplier corresponding to constraint \eqref{C}. Here, $\widetilde{\HHH}\triangleq \HHH\setminus \{\eqref{C}\}$.

\begin{proposition}
\label{proposition-dual2}
The values $C_{\lambda}$ of $C$ at the solution of \eqref{P:EE:fullbh:3:relax} corresponding to $\lambda$ converge to $0$ as $\lambda \rightarrow +\infty$. Also, problem \eqref{P:EE:fullbh:3} has strong duality, i.e.,
\begin{equation}\label{Strong:Dualitly:hold2}
\underset{\widehat{\qx}\in\HHH}{\min}\,\,
-t
=
\underset{\lambda\geq0}{\sup}\,\,
\underset{\widehat{\qx}\in\widetilde{\HHH}}{\min}\,\,
\mathcal{L}_{\mathtt{EE}^{\vFD}}(\widehat{\qx}).
\end{equation}
Then, problem \eqref{P:EE:fullbh:3} is equivalent to problem \eqref{P:EE:fullbh:3:relax} at the optimal solution $\lambda^* \geq0$ of the sup-min problem in \eqref{Strong:Dualitly:hold2}.
\end{proposition}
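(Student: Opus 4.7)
The proof will follow the same template used for Proposition~\ref{proposition-dual}, which itself adapts the exact penalty argument of \cite[Proposition~1]{vu18TCOM}. The plan is to exploit the fact that on the relaxed domain $a_m, b_m\in[0,1]$ the function $C(\qa,\qb)$ is nonnegative and vanishes precisely on the binary set, so that the Lagrangian term $\lambda C$ acts as an \emph{exact} quadratic penalty that progressively enforces the integrality of $(\qa,\qb)$ as $\lambda$ grows.

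First, I would establish the convergence claim $C_{\lambda}\to 0$. Denote by $\widehat{\qx}(\lambda)$ an optimal solution of problem \eqref{P:EE:fullbh:3:relax}. For every $a_m\in[0,1]$ one has $a_m-a_m^2\geq 0$ (similarly for $b_m$), so $C(\qa,\qb)\geq 0$ on $\widetilde{\HHH}$ with equality exactly when $(\qa,\qb)$ is binary. Pick any feasible $\widehat{\qx}_0\in\HHH$ (existence being granted by the same bootstrapping construction used for Algorithm~\ref{alg}); it satisfies $C(\qa_0,\qb_0)=0$, hence $\mathcal{L}_{\mathtt{EE}^{\vFD}}(\widehat{\qx}_0)=-t_0$ is a finite constant independent of $\lambda$. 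Optimality of $\widehat{\qx}(\lambda)$ yields $-t(\lambda)+\lambda C_{\lambda}\leq -t_0$, and since $t(\lambda)$ is bounded above on the compact relaxed feasible set, rearranging gives $C_{\lambda}\leq (t(\lambda)-t_0)/\lambda\to 0$ as $\lambda\to\infty$.

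Next, to establish the strong duality identity \eqref{Strong:Dualitly:hold2}, I would verify both inequalities separately. Weak duality is immediate: for any $\widehat{\qx}\in\HHH$ and $\lambda\geq 0$, the inequality $C\leq 0$ combined with $\lambda\geq 0$ gives $\mathcal{L}_{\mathtt{EE}^{\vFD}}(\widehat{\qx})\leq -t$, so $\min_{\widetilde{\HHH}}\mathcal{L}_{\mathtt{EE}^{\vFD}}\leq \min_{\HHH}(-t)$ and taking the supremum in $\lambda$ preserves this bound. For the reverse direction, the sequence $\{\widehat{\qx}(\lambda)\}$ lies in the compact set $\widetilde{\HHH}$ and therefore admits a subsequential limit $\widehat{\qx}^\star$; by continuity of $C$ together with the convergence just shown, $C(\widehat{\qx}^\star)=0$, so $\widehat{\qx}^\star\in\HHH$ and $\mathcal{L}_{\mathtt{EE}^{\vFD}}(\widehat{\qx}(\lambda))\to -t(\widehat{\qx}^\star)\geq \min_{\HHH}(-t)$. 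Equivalence at a sufficiently large $\lambda^\star\geq 0$ then follows because $C_{\lambda^\star}=0$ forces the relaxed minimizer to be binary-feasible, hence lie in $\HHH$ and attain the original optimum.

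The main technical hurdle I anticipate is confirming that $\widetilde{\HHH}$ is genuinely compact, since problem \eqref{P:EE:fullbh:3} differs from its SE counterpart by the auxiliary variables $t,\hat{p}$ together with the nonconvex bilinear coupling \eqref{tphat}. Compactness for the original block of variables is inherited from the per-AP power bound \eqref{DL:power:cons}, the UE power bound \eqref{UL:power:cons}, the LSFD bound \eqref{eq:alphml}, and the relaxation bounds \eqref{abrelax}; the auxiliary $\hat{p}$ is bounded through \eqref{phat} in terms of these, while $t$ admits a uniform cap because \eqref{tphat} combined with the finite SE upper bounds implied by \eqref{eq:DL:SE} and \eqref{eq:UL:SE} controls it from above, and $\hat{p}$ is bounded away from zero by the circuit and hardware power terms in $\widetilde{P}$. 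Once compactness and continuity are secured, the remaining steps are identical to those in the proof of Proposition~\ref{proposition-dual} and the argument closes.
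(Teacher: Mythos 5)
Your overall strategy is the right one: it is exactly the exact-penalty/strong-duality argument of \cite[Proposition~1]{vu18TCOM} that the paper invokes (and omits) for both Propositions~\ref{proposition-dual} and~\ref{proposition-dual2}, and the skeleton — $C\geq 0$ on the relaxed box with equality iff binary, the bound $\lambda C_{\lambda}\leq t(\lambda)-t_0$ from comparing against a binary-feasible point, weak duality from $\HHH\subseteq\widetilde{\HHH}$, and a limit-point argument for the reverse inequality — is sound.

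However, the step you yourself single out as the crux is handled incorrectly as written. The set $\widetilde{\HHH}$ is \emph{not} compact: \eqref{phat} bounds $\hat{p}$ from \emph{below} (it reads $\hat{p}\geq\widetilde{P}(\x)$), not from above, and the same one-sided structure holds for all the slack variables introduced in \eqref{omegabar}--\eqref{etahat} (i.e., $\bar{\OOmega},\hat{\OOmega},\tilde{\ALPHA},\hat{\ALPHA},\bar{\ETA},\hat{\ETA}$), which appear only on the ``large'' side of their constraints. Hence the assertion that ``the sequence $\{\widehat{\qx}(\lambda)\}$ lies in the compact set $\widetilde{\HHH}$'' fails, and your justification that ``$\hat{p}$ is bounded through \eqref{phat}'' points the inequality the wrong way. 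The repair is standard and should be stated: the objective $-t+\lambda C(\qa,\qb)$ does not involve these slacks, and enlarging them only tightens \eqref{tphat} and \eqref{UL:QoS:cons:1:equi}, so without loss of generality every minimizer can be taken with each slack at its minimal feasible value (e.g., $\hat{p}=\widetilde{P}(\x)$); after this normalization the minimizer sequence does live in a compact set determined by \eqref{DL:power:cons}, \eqref{UL:power:cons}, \eqref{eq:alphml}, \eqref{abrelax} and the bounded $(\qq_{\ul},\qq_{\dl},t)$ (with $t\leq B(\sum_{\ell}q_{\ul,\ell}+\sum_k q_{\dl,k})/\PUfix$ from \eqref{tphat} and $\hat{p}\geq\PUfix>0$), and the limit-point argument goes through. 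Two smaller points: the claim $\mathcal{L}_{\mathtt{EE}^{\vFD}}(\widehat{\qx}(\lambda))\to -t(\widehat{\qx}^\star)$ tacitly uses $\lambda C_{\lambda}\to 0$, which does not follow from $C_{\lambda}\to 0$ alone — either derive it by sandwiching the monotone, bounded sequence $\min_{\widetilde{\HHH}}\mathcal{L}_{\mathtt{EE}^{\vFD}}$ between $-t(\lambda)$ and $\min_{\HHH}(-t)$, or avoid it by working with the inequality $\mathcal{L}_{\mathtt{EE}^{\vFD}}\geq -t$; and your final sentence assumes $C_{\lambda^*}=0$ at a finite $\lambda^*$, which the asymptotic convergence does not provide — the equivalence at $\lambda^*$ in \eqref{Strong:Dualitly:hold2} has to be argued as in the cited reference (via the attained sup and the fact that any relaxed minimizer with $C=0$ is feasible and optimal for \eqref{P:EE:fullbh:3}) rather than asserted.
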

\noindent
Similarly, the proof of Proposition~\ref{proposition-dual2} follows \cite{vu18TCOM}, and hence, omitted. According to Proposition~\ref{proposition-dual2}, $C_{\lambda}$ converges to $0$ as $\lambda\to+\infty$, and the optimal solution to problem \eqref{P:EE:fullbh:3} is obtained. We recall that for practical implementation, it is acceptable for $C_{\lambda}$ to be sufficiently small with a sufficiently large value of $\lambda$. In our numerical experiments, for $\varepsilon = 5\times 10^{-5}$, we see that $\lambda=10$ is enough to ensure that $C_{\lambda}/(MK) \leq \varepsilon$. 

From \eqref{xy:ub}, the nonconvex constraint \eqref{tphat} can be approximated by the following convex constraint
\begin{align}
    \label{tphat:convex}
    &0.25 [(t+\hat{p})^2-2(t^{(n)}-\hat{p}^{(n)})(t-\hat{p}) + (t^{(n)}-\hat{p}^{(n)})^2]    \nonumber\\
&\hspace{2em}
\leq B  \sum_{\ell\in\mathcal{K}_u} q_{\ul,\ell}   + \sum_{k\in\mathcal{K}_d} q_{\dl,k}.
\end{align}
To deal with the other nonconvex constraints of problem \eqref{P:EE:fullbh:3:relax}, similar approximation techniques in Section~\ref{Sec:Numer} are used. Finally, at iteration $(n+1)$, for a given point $\widehat{\qx}^{(n)}$, problem \eqref{P:EE:fullbh:3:relax} can finally be approximated by the following convex problem:
\begin{align}
\label{P:EE:equi:relax:approx}
\underset{\widehat{\qx}\in\widehat{\mathcal{H}}}{\min} \,\,
& \widehat{\mathcal{L}}_{\mathtt{EE}^{\vFD}}(\widehat{\qx}),
\end{align}
where $\widehat{\mathcal{L}}_{\mathtt{EE}^{\vFD}}(\widetilde{\qx}) = - t + \lambda \widetilde{C}(\qa, \qb)$, $\widehat{\mathcal{H}} \triangleq \{\widehat{\FF}, \eqref{tphat:convex}\} \setminus \{\eqref{tphat}\}$ is a convex feasible set. In Algorithm~\ref{alg:EE}, we outline the main steps to solve problem \eqref{P:EE:fullbh:3}.
Starting from a random point $\widehat{\qx}\in\widehat{\mathcal{H}}$, we solve \eqref{P:EE:equi:relax:approx} to obtain its optimal solution $\widehat{\qx}^*$, and use $\widehat{\qx}^*$ as an initial point in the next iteration. 
The algorithm terminates when an accuracy level is reached. Algorithm~\ref{alg:EE} will converge to a stationary point, i.e., a Fritz John solution, of problem \eqref{P:EE:fullbh:3} (hence \eqref{P:EE:fullbh:2} or \eqref{P:EE:fullbh}). The proof of this convergence property also uses similar steps in the proof of \cite[Proposition 2]{vu18TCOM}, and hence, is omitted due to lack of space. 

\begin{algorithm}[!t]
\caption{Solving problem \eqref{P:EE:fullbh:3:relax}}
\begin{algorithmic}[1]
\label{alg:EE}
\STATE \textbf{Initialize}: $n\!=\!0$, 
$\lambda > 1$, a random point $\widehat{\qx}^{(0)}\!\in\!\widehat{\mathcal{H}}$.
\REPEAT
\STATE Update $n=n+1$
\STATE Solve \eqref{P:EE:equi:relax:approx} to obtain its optimal solution $\widehat{\qx}^*$
\STATE Update $\widehat{\qx}^{(n)}=\widehat{\qx}^*$
\UNTIL{convergence}
\end{algorithmic}
\end{algorithm}

Algorithm~\ref{alg:EE} requires solving a series of convex problems \eqref{P:EE:equi:relax:approx}. Problem \eqref{P:EE:equi:relax:approx}  can be transformed to an equivalent problem that involves $A_v\triangleq 2M + 3K + 9MK + MK^2 + M^2K^2 + 2$ real-valued scalar variables, $A_l\triangleq 5M + 4K + 9MK + MK^2 + M^2K^2 + 1$ linear constraints, $A_q\triangleq M + 2K + 7MK + MK^2 + M^2K^2 + 1$ quadratic constraints. Therefore, the algorithm for solving problem \eqref{P:SE:equi:relax}  requires a complexity of $\OO(\sqrt{A_l+A_q}(A_v+A_l+A_q)A_v^2)$ \cite{tam16TWC}.

\begin{remark}[Initial point and infeasible EE maximization problem] \label{remark:EEfeasible}
We recall that when the individual SE requirements $\mathcal{S}_\ul^o$ and $\mathcal{S}_\dl^o$ are large but the UEs have unfavourable links to the APs, the QoS constraints \eqref{UL:QoS:cons:2} and \eqref{DL:QoS:cons:2} are difficult to satisfy. In this case, we use a similar procedure as discussed in Remark~\ref{remark:SEfeasible} for Algorithm~\ref{alg:EE}. Specifically, we start with a random point $\widehat{\x} \in \widehat{\HHH} \setminus \{\eqref{UL:QoS:cons:2}, \eqref{DL:QoS:cons:2}\}$ and solve the following problem (instead of problem \eqref{P:EE:equi:relax:approx}) in each iteration of Algorithm~\ref{alg:EE}
\begin{subequations}
\begin{align}\label{P:EE:equi:relax:approx:s}
\underset{\widehat{\qx}\in\widehat{\HHH}\setminus \{\eqref{UL:QoS:cons:2}, \eqref{DL:QoS:cons:2}\}, \z_{\ul}, \z_{\dl}}{\min} \,\,
& \widehat{\mathcal{L}}_{\mathtt{EE}^{\vFD}}(\widehat{\qx}) + \phi z,
\\ 
\mathrm{s.t.} \,\,\,\,\,\,\,\,\,\,\,\,\,\,\,\,\,\,\,
& \eqref{qul:s}, \eqref{qdl:s}, \eqref{s:cons},
\end{align}    
\end{subequations}
where $z$ is defined in Remark~\ref{remark:SEfeasible}. When Algorithm~\ref{alg:EE} converges, if $z$ is smaller than a predefined error tolerance, the problem \eqref{P:EE:fullbh:2} or \eqref{P:EE:fullbh} is feasible and we take the converged point as the final solution. Otherwise, problem \eqref{P:EE:fullbh:2} or \eqref{P:EE:fullbh} is considered as an infeasible problem. 
\end{remark}

\section{Baseline Schemes}\label{sec:bench}
To investigate the effectiveness of our proposed optimized network-assisted full-duplex ($\vvFD$) scheme for CF-mMIMO systems, we introduce the following baseline schemes for comparisons in the numerical results of Section~\ref{Sec:Numer}. 

\subsection{Network-Assisted Full-Duplex CF-mMIMO Systems: Heuristic Approaches}
To show the advantages of the joint optimization of AP mode assignment, power control, and LSFD weights in our $\vvFD$ scheme, we consider two heuristic NAFD schemes as follows.

\subsubsection{Network-assisted full-duplex with random AP mode assignment ($\RvFD$)} 
In this scheme, we assume that the AP modes $(\aaa,\bb)$ are randomly assigned. Accordingly, we optimize the power control coefficients $(\ETA,\VARSIGMA)$ and LSFD weights $\ALPHA$, under the same SE requirement constraints for UL and DL UEs. The problems of sum SE maximization of the $\RvFD$ scheme for the given random mode assignment vectors $(\aaa,\bb)$ can be respectively expressed as
\begin{subequations}
\begin{align}\label{P:SE:RvFD}
\underset{\THeta, \VARSIGMA, \ALPHA, \qq_{\ul},\qq_{\dl}}{\min}\,\, &
- \sum_{\ell\in\mathcal{K}_u} q_{\ul,\ell}   -  \sum_{k\in\mathcal{K}_d} q_{\dl,k}  \\
\mathrm{s.t.} \,\,\,\,\,\,\,\,\,
& \eqref{a}-\eqref{sumab}, \eqref{theta},
		\eqref{etaa:relation}, \eqref{DL:power:cons}, \eqref{UL:power:cons}, \eqref{eq:alphml}, \eqref{UL:QoS:cons:1}, \nonumber\\
  &\eqref{UL:QoS:cons:2}, \eqref{DL:QoS:cons:1}, \eqref{DL:QoS:cons:2}
\end{align}    
\end{subequations}
Similar to our $\vvFD$ scheme, we find the suboptimal solution to the EE maximization problem of this scheme which is given as
\begin{subequations}
\begin{align} \label{P:EE:RvFD}
	\underset{\THeta, \VARSIGMA, \ALPHA, t, \hat{p}}{\min}\,\, &
	- t
	  \\
	\mathrm{s.t.} \,\,\,\,\,
	& \eqref{theta},
	\eqref{etaa:relation}, \eqref{DL:power:cons}, \eqref{UL:power:cons},
 \eqref{eq:alphml},\eqref{UL:QoS:cons:1}-\eqref{DL:QoS:cons:2},\nonumber\\
 & \eqref{tphat}, \eqref{phat}.
\end{align}
\end{subequations}
Since problems~\eqref{P:SE:RvFD} and~\eqref{P:EE:RvFD} have the same mathematical structure as problems~\eqref{P:SE:equi} and~\eqref{P:EE:fullbh:2}, we solve problem~\eqref{P:SE:RvFD} and problem~\eqref{P:EE:RvFD} by using Algorithms~\ref{alg} and~\ref{alg:EE} with some slight modifications. 

\subsubsection{Network-assisted full-duplex CF-mMIMO with greedy AP mode assignment, fixed power control coefficients and LSFD weights ($\GvFD$)}
The AP mode assignment is performed by a greedy algorithm proposed in~\cite{chowdhury2021can}. 
Let $\mathcal{A}_{\ul}$ and $\mathcal{A}_{\dl}$ denote the sets containing the indices of UL APs, and DL APs, respectively. Also, let $\AAA_s \triangleq \AAA_{\ul} \bigcup \AAA_{\dl}$ be the set of assigned APs and $\AAA_{s'} \triangleq \MM \setminus \AAA_{s}$ be the set of unassigned APs. 
Denote by $\mathcal{S}_{sum}(\mathcal{A}_{\ul}, \mathcal{A}_{\dl}) \triangleq \sum_{k=1}^{K_d}\mathcal{S}_{\ul,k}(\mathcal{A}_{\ul}, \mathcal{A}_{\dl}) + \sum_{\ell=1}^{K_u}\mathcal{S}_{\dl,\ell}(\mathcal{A}_{\ul}, \mathcal{A}_{\dl})$ the sum SE that captures the dependence of the sum SE on the different choices of $\mathcal{A}_{\ul}$ and $\mathcal{A}_{\dl}$. The greedy algorithm for AP mode assignment is shown in Algorithm~\ref{alg:Grreedy}. The key idea of Algorithm~\ref{alg:Grreedy} is to take one AP out of the set of unassigned APs, $\AAA_{s'}$, in each iteration and assign to this AP the mode that offers the highest sum SE until $\AAA_{s'}$ is empty. In this algorithm, the power control coefficients and LSFD weights are fixed, i.e., $\theta_{mk} = \frac{a_{m}}{\sqrt{N K_d \gamdmk}}$, $\varsigma_{\ell} = 1$, $\alpha_{m\ell}=1, \forall m,k,\ell$.

\begin{algorithm}[!t]
\caption{Greedy AP mode assignment for SE maximization~\cite{chowdhury2021can}}
\begin{algorithmic}[1]
\label{alg:Grreedy}
\STATE 
\textbf{Initialize}:  $\mathcal{A}_{\ul}=\mathcal{A}_{\dl}=\emptyset$  
\REPEAT
\STATE  $i_{\ul}^{\star} =\argmax_{i\in\mathcal{A}_{s^{\prime}}} \mathcal{S}_{sum}(\mathcal{A}_{\ul}\bigcup\{i\})$\\
\STATE  $i_{\dl}^{\star} =\argmax_{i\in\mathcal{A}_{s^{\prime}}} \mathcal{S}_{sum}(\mathcal{A}_{\dl}\bigcup\{i\})$
\IF{$\mathcal{S}_{sum}(\mathcal{A}_{\ul}\bigcup\{i_{\ul}^{\star}\}) \geq \mathcal{S}_{sum}(\mathcal{A}_{\dl}\bigcup\{i_{\dl}^{\star}\})$ } 
\STATE {Update  $\mathcal{A}_{\ul}=\mathcal{A}_{\ul}\bigcup\{i_{\ul}^{\star}\}$} 
\ELSE 
\STATE{Update  $\mathcal{A}_{\dl}=\mathcal{A}_{\dl}\bigcup\{i_{\dl}^{\star}\}$} 
\ENDIF
\STATE Update $\mathcal{A}_s=\mathcal{A}_{\ul}\bigcup \mathcal{A}_{\dl}$,
\UNTIL{$\mathcal{A}_{s{^\prime}}=\emptyset$}
\end{algorithmic}
\end{algorithm}

Since Algorithm~\ref{alg:Grreedy} is proposed to only maximize the sum SE, we calculate the EE of this $\GvFD$ scheme by using the AP mode solution obtained from Algorithm~\ref{alg:Grreedy}. Specifically, we use the obtained solution $(\mathcal{A}_{\dl},\mathcal{A}_{\ul})$ to make up the AP mode assignment vectors $(\qa,\qb)$. Then, the EE is calculated by using~\eqref{eq:EE:def} for given $(\qa,\qb)$, in which the total power consumption is computed by~\eqref{eq:Ptotal:final}.   

\subsection{Half-Duplex CF-mMIMO Systems}
To show the advantages of our proposed $\vvFD$ scheme, we compare it with the conventional HD  CF-mMIMO systems ($\HHD$)\cite{Hien:cellfree}. In this system, the DL-and-UL payload data transmission phase is divided into two equal time fractions of length $(\tau_c-\tau_t)/2$. Each UL or DL data transmission is performed in one time fraction. In each time fraction, all UL or DL UEs are served by all the APs, i.e., $a_{m} = b_{m} = 1, \forall m$. There is no interference from the UL UEs to DL UEs, and from the DL APs to UL APs. There is also an additional factor of $\frac{1}{2}$ applied in the SE expression and the total power consumption. This factor captures the fact that each DL or UL transmission is only performed and consumes power in half of the time fraction. In particular, the SE expressions of DL UE $k$ is given by
\begin{align}
\label{SEHDdlk}
&\mathcal{S}_{\dl,k}^{\HD} (\boldsymbol \theta, {\boldsymbol{\varsigma}}) =  \frac{\tau_c-\tau_t}{2\tau_c}
\log_2 \left(1  
    + \text{SINR}_{\dl,k}^{\HD} (\boldsymbol \theta, {\boldsymbol{\varsigma}})
	\right),
\end{align}
where 
$$\text{SINR}_{\dl,k}^{\HD} (\boldsymbol \theta, {\boldsymbol{\varsigma}}) \triangleq \frac{N_t^2 \rho_{d} \left(
                           \sum_{{m\in\mathcal{M}}}
                            \theta_{mk} \gamdmk \right)^2 }{\rho_{d} N_t
    \sum_{k'\in\mathcal{K}_{d}}\sum_{m\in\mathcal{M}}
    \theta_{mk'}^2 \betamkd\gamdmkp +
    1}\vspace{0em},$$ 
while the SE of the UL UE $\ell$ is given by
\begin{align}
    \label{SEHDulell}
    \mathcal{S}_{\ul,\ell}^{\HD} (\boldsymbol{\varsigma}, \boldsymbol{\theta}, \boldsymbol{\alpha} )
	=\! \frac{\tau_c\!-\!\tau_t}{2\tau_c}\log_2
	\left(1+ \text{SINR}_{\ul,\ell}^{\HD}(\boldsymbol{\varsigma}, \boldsymbol{\theta}, \boldsymbol{\alpha} ) \right),
\end{align}
where 
\begin{align}
&\text{SINR}_{\dl,k}^{\HD} (\boldsymbol \theta, {\boldsymbol{\varsigma}}) 
\triangleq
\nonumber\\
&\frac{
	N_r \rho_{u} \left(\sum_{\substack{m\in\mathcal{M}}} \sqrt{b_m \varsigma_{\ell}} \alphml \gamuml \right)^2
	}
	{\!\!\rho_{u}\!
		\! \sum_{\substack{m\in\mathcal{M}}}\!
		\! \sum_{q\in\mathcal{K}_u}\!
		\!\!
		b_m \varsigma_{q}
		\alphml^2
		\betamlu
		\gamma_{m\ell}^{\ul}
		\!+\!
		\!\sum_{\substack{m\in\mathcal{M}}}
		\!
		b_m\alphml^2 \!\gamuml}.\!\vspace{0em}
\end{align}
The total power consumption in the HD scheme is
\begin{align}
    \label{eq:Ptotal:HD}
    P_\mathtt{total}^{\HD} (\x)
    & \triangleq \frac{1}{2} \Bigg[
      \sum_{m\in\mathcal{M}} \frac{N\rho_{d}\Sn}{\zeta_m}\left(\sum_{k\in\mathcal{K}_{d}} \gamdmk \theta_{mk}^2 
    \right) +\PUfix
    \nonumber\\
&\hspace{2em}
+\sum_{\ell\in\mathcal{K}_{u}} \frac{\rho_u\Sn}{\chi} {\varsigma}_{\ell}      
    +\sum_{m\in\MM} \SSS^{\HD} (\x)
    P_{\mathtt{bt},m}
    \nonumber\\
    & \hspace{2em} + \sum_{m\in\mathcal{M}} (N P_{\mathtt{cdl},m} + P_{\mathtt{fdl},m}) 
    \nonumber\\
    &\hspace{2em}
    + \sum_{m\in\mathcal{M}} (NP_{\mathtt{cul},m} + P_{\mathtt{ful},m}) \Bigg],
\end{align}
where $\SSS^{\HD} (\x) \triangleq \sum_{\ell\in\mathcal{K}_u} \mathcal{S}_{\ul,\ell}^{\HD} (\qb, \boldsymbol \varsigma, \boldsymbol \theta, \boldsymbol \alpha)   + \sum_{k\in\mathcal{K}_d}\mathcal{S}_{\dl,k}^{\HD} (\qa, \boldsymbol \theta, {\boldsymbol{\varsigma}}).$

In the $\HHD$ scheme, the power coefficients ($\THeta,\VARSIGMA$) and LSFD weights $\ALPHA$ are optimized to maximize the sum SE and total EE. Note that the mathematical formulas of the SE expressions and the total power consumption in \eqref{SEHDdlk}--\eqref{eq:Ptotal:HD} are the simplified versions of those of \eqref{eq:DL:SE}, \eqref{eq:UL:SE}, and \eqref{eq:Ptotal:upper} of the $\vvFD$ scheme. Therefore, the problems of maximizing the sum SE and EE in the $\HHD$ scheme are similar to problems~\eqref{P:SE:RvFD} and~\eqref{P:EE:RvFD}, and hence, can be solved by using Algorithms~\ref{alg} and~\ref{alg:EE} with appropriate modifications.

\subsection{Full-Duplex CF-mMIMO Systems}
\label{subsec:FD}
We further compare our $\vvFD$ scheme with the traditional FD CF-mMIMO systems ($\FFD$) \cite{tung19ICC,Nguyen:JSAC:2020}. In this scheme, all the APs operate in a FD mode and serve the UL and DL UEs simultaneously at the same frequency. Therefore, $a_m = b_m = 1, \forall m$. Each AP is equipped with $N_t$ transmit antennas and $N_r$ receive antennas. To have a fair comparison with the $\RvFD, \GvFD, \vvFD, \HHD$ schemes, wherein the APs operate in HD mode, the $\FFD$ scheme deploys the same number of antennas as the other schemes, i.e., $N = N_t + N_r$, which is called a ``antenna-number-preserved'' condition \cite{chang12MBC,himal14TWC,mohammad18TWC}.
Therefore, the DL SE of the $\FFD$ scheme is
\begin{align}~\label{eq:DL:SE:FD}
\mathcal{S}_{\dl,k}^{\FD} (\boldsymbol \theta, {\boldsymbol{\varsigma}}) =  \frac{\tau_c-\tau_t}{\tau_c}
\log_2 \left(1  
    + \frac{(\Xi_k^{\FD})^2}{\Omega_k^{\FD}}
	\right),
	\end{align}
where
\begin{align}
    \nonumber
    &\Xi_k^{\FD} (\boldsymbol \theta) \triangleq N_t \sqrt{\rho_{d}}
                           \sum_{{m\in\mathcal{M}}}
                            \theta_{mk} \gamdmk
    \\
    \nonumber
    &\Omega_k^{\FD} (\boldsymbol \theta, {\boldsymbol{\varsigma}}) 
    \!\triangleq \rho_{d} N_t \!
    \!\sum_{k'\in\mathcal{K}_{d}}\!\!\sum_{m\in\mathcal{M}}\!\!
    \theta_{mk'}^2 \betamkd\gamdmkp \!+\! \rho_u\!\sum_{\ell\in\mathcal{K}_u}\!\!  {\vsl} \betakldu \!+\! 1,
\end{align}
while the UL SE of the $\FFD$ scheme is given by
\begin{align}
    \label{eq:UL:SE:FD}
	\mathcal{S}_{\ul,\ell}^{\FD} ( \boldsymbol{\varsigma}, \boldsymbol{\theta}, \boldsymbol{\alpha} )
	=\! \frac{\tau_c\!-\!\tau_t}{\tau_c}\log_2
	(1+ \text{SINR}_{\ul,\ell}^{\FD} (\boldsymbol{\varsigma}, \boldsymbol{\theta}, \boldsymbol{\alpha} )),
\end{align}
where $\text{SINR}_{\ul,\ell}^{\FD}( \boldsymbol{\varsigma}, \boldsymbol{\theta}, \boldsymbol{\alpha} )$ is given at~\eqref{eq:SINRulFD} at the top of the page.
\begin{figure*}
\begin{align}~\label{eq:SINRulFD}
\text{SINR}_{\ul,\ell}^{\FD}( \boldsymbol{\varsigma}, \boldsymbol{\theta}, \boldsymbol{\alpha} )=\frac{
	N_r \rho_{u} \left(\sum_{\substack{m\in\mathcal{M}}} \sqrt{ \varsigma_{\ell}} \alphml \gamuml \right)^2
	}
	{\rho_{u}\!
		\sum_{\substack{m\in\mathcal{M}}}\!
		\sum_{q\in\mathcal{K}_u}\!
		\varsigma_{q}
		\alphml^2
		\betamlu
		\gamma_{m\ell}^{\ul}
		+
		\rho_{d} N_r
		\sum_{\substack{m\in\mathcal{M}}}
		\sum_{\substack{i\in\mathcal{M}}}
		\sum_{k\in\mathcal{K}_d}
		 \theta_{ik}^2 \alphml^2  \gamuml \beta_{mi} \gamma_{ik}^{\dl}
		+
		\sum_{\substack{m\in\mathcal{M}}}
		\alphml^2 \gamuml}.
 \end{align}
  	\hrulefill
	\vspace{-4mm}
  \end{figure*}

Each AP in the FD CF-mMIMO system consumes some amount of power for SIS \cite{Riihonen:TSP:2011,zhang19TWC}.  Let $P_{\text{SIS},m}$ be the power required for SIS at each receive antenna at AP $m$. Then, the total power consumption in the FD CF-mMIMO system  is
\begin{align}
    \label{eq:Ptotal:FD}
    P_\mathtt{total}^{\FD} (\x)
    & \triangleq
      \!\sum_{m\in\mathcal{M}}\!\! \frac{N_t\rho_{d}\Sn}{\zeta_m}\left(\sum_{k\in\mathcal{K}_{d}} \gamdmk \theta_{mk}^2 
    \right) 
    +\!\sum_{\ell\in\mathcal{K}_{u}}\!\!
    \frac{\rho_u\Sn}{\chi} {\varsigma}_{\ell}
    \nonumber\\
    &\hspace{0em}
    +\PUfix 
    +\sum_{m\in\MM} \SSS^{\FD} (\x)
    P_{\mathtt{bt},m}
    \nonumber\\
    & \hspace{0em} + \sum_{m\in\mathcal{M}} (N_t P_{\mathtt{cdl},m} + P_{\mathtt{fdl},m}) 
    \nonumber\\
    &\hspace{0em}
    + \sum_{m\in\mathcal{M}} (N_r P_{\mathtt{cul},m} + P_{\mathtt{ful},m}) + \sum_{m\in\MM} N_r P_{\text{SIS},m},
\end{align}
where 
\begin{align}
    \SSS^{\FD} (\x) \triangleq \sum_{\ell\in\mathcal{K}_u} \mathcal{S}_{\ul,\ell}^{\FD} (\qb, \boldsymbol \varsigma, \boldsymbol \theta, \boldsymbol \alpha)   +  \sum_{k\in\mathcal{K}_d}\mathcal{S}_{\dl,k}^{\FD} (\qa, \boldsymbol \theta, {\boldsymbol{\varsigma}}).
\end{align}
 We note that in a FD CF-mMIMO system, the CPU needs to share with the APs the full signaling loads of all the DL and UL UEs. Therefore, all APs contribute to the two penultimate terms in~\eqref{eq:Ptotal:FD}, while in~\eqref{eq:Ptotal:final}, the APs are partially contributing to the corresponding terms.

We recall that in a $\FFD$ system, the SI can be suppressed using a dedicated hardware and the information of transmit signal. Since the SIS process is normally imperfect, there is still a remaining level of SI, which is called the residual SI after SIS~\cite{Riihonen:TSP:2011}. In a FD CF-mMIMO system, the SI at each AP can be modeled as Rayleigh fading channel~\cite{Riihonen:TSP:2011,hien14JSAC,dan13TSP}. Specifically, we denote by $\Z_{mm}$ the residual SI link at each AP, whose elements are i.i.d $\CN(0,\sigma_{\SSSI}^2)$ RVs, where 
$\sigma_{\SSSI}^2$ is the power of residual SI after SIS at each AP. Note that in our $\vvFD$ scheme where all APs operate in HD mode, there are no SI at each AP, i.e.,  $\sigma_{\SSSI}^2 = 0$. Therefore, the residual SI model is consistent with the model of the interference matrix $\Z_{mm} =0 , \forall m$, in Section~\ref{phase:ULforCE}.

In the $\FFD$ scheme, the power coefficients ($\THeta,\VARSIGMA$) and LSFD weights $\ALPHA$ are also optimized to achieve the maximum sum SE and EE. Therefore, the problems of maximizing the sum SE and EE of the $\FFD$ scheme are similar to problems~\eqref{P:SE:RvFD} and~\eqref{P:EE:RvFD}. Here, the changes in the SE expressions, total power consumption, and the SI matrices make no difference in the mathematical structures of the sum SE and EE maximization problems of the $\FFD$ scheme compared to problems~\eqref{P:SE:RvFD} and~\eqref{P:EE:RvFD}. Thus, the sum SE and EE maximization problems of the $\FFD$ scheme can be solved by using Algorithms~\ref{alg} and~\ref{alg:EE} with some slight modifications.

\begin{remark}
  Our proposed NAFD scheme for CF-mMIMO systems cannot be compared with those in the literature, i.e.,~\cite{Wang:TCOM:2020,Xia:TVT:2020,Jiamin:TWC:2021,Xinjiang:TWC:2021}. These works consider the SE with instantaneous channels, while the SE and EE of our work rely only on the statistical property of the channels. 
\end{remark}

\vspace{0em}
\section{Numerical Examples}~\label{Sec:Numer}

\vspace{-2em}
\subsection{Network Setup and Parameter Setting}
We consider a CF-mMIMO network, where the APs and UEs are randomly distributed in a square of $0.5 \times 0.5$ km${}^2$, whose edges are wrapped around to avoid the boundary effects. The distances between adjacent APs are at least $50$ m\cite{emil20TWC}. Unless otherwise stated, the values of the network parameters are: $N=2$, $N_t=N_r=1$, $\mathcal{S}_\dl^o=\mathcal{S}_\ul^o= \mathcal{S}_{QoS}$ bit/s/Hz, $K_d=K_u =K$, $\tau_c=200$, and $\tau_t=K_d+K_u$. 
We further set the bandwidth $B=50$ MHz and noise figure $F = 9$ dB. Thus, the noise power $\Sn=k_B T_0 B F$, where $k_B=1.381\times 10^{-23}$ Joules/${}^o$K is the Boltzmann constant, while $T_0=290^o$K is the noise temperature. Let $\tilde{\rho}_d = 1$ W, $\tilde{\rho}_u = 0.1$~W and $\tilde{\rho}_t = 0.1$~W be the maximum transmit power of the APs, UL users and UL training pilot sequences, respectively. The normalized maximum transmit powers ${\rho}_d$, ${\rho}_u$, and ${\rho}_t$ are calculated by dividing these powers by the noise power.

\begin{table*}
\vspace{0.5em}
	\caption{Parameters of Power Consumption} 
	\vspace{-0.5em}
	\centering 
	\begin{tabular}{|c | c |}
		\hline
		\textbf{Parameter} & \textbf{Value}  \\ [0.5ex]
		\hline
Fixed power consumption/ each backhaul  ($P_{\mathtt{fdl},m}, P_{\mathtt{ful},m}$, $\forall m$)~\cite{Emil:TWC:2015:EE,Hien:TGCN:2018} & $0.825$ W   \\
		\hline
Internal power consumption/antenna  ($P_{\mathtt{cdl},m}$ and $P_{\mathtt{cul},m}, \forall m$)~\cite{Hien:TGCN:2018} & $0.2$ W   \\
		\hline
Traffic-dependent backhaul power  ($P_{\mathtt{bt},m}, \forall m$)~\cite{Emil:TWC:2015:EE,Hien:TGCN:2018} & $0.25$ W/(Gbits/s) \\
		\hline
Power amplifier efficiency at the APs  ($\zeta_m$, $\forall m$)~\cite{Hien:TGCN:2018} & $0.4$   \\
		\hline
Power amplifier efficiency at the UEs  ($\chi$)~\cite{bashar19TGCN} & $0.3$   \\
		\hline
Fixed power consumption UL and DL UE  ($P_{\mathtt{U},\ell}, P_{\mathtt{D},k}, \forall \ell, k $)~\cite{bashar19TGCN} & $0.1$ W \\
\hline	
	\end{tabular}
	\label{tab:PowerconsumptionParameter}
	\vspace{0.95em}
\end{table*}

We model the large-scale fading coefficients $\beta_{mk}$ as~\cite{emil20TWC}
\begin{align}\label{fading:large}
\beta_{mk} = 10^{\frac{\text{PL}_{mk}^d}{10}}10^{\frac{F_{mk}}{10}},
\end{align}
where $10^{\frac{\text{PL}_{mk}^d}{10}}$ represents the path loss, and $10^{\frac{F_{mk}}{10}}$ represents the shadowing effect with $F_{mk}\in\mathcal{N}(0,4^2)$ (in dB).  Here, $\text{PL}_{mk}^d$ (in dB) is given by  \cite{emil20TWC}
\begin{align}\label{PL:model}
\text{PL}_{mk}^d = -30.5-36.7\log_{10}\left(\frac{d_{mk}}{1\,\text{m}}\right),
\end{align}
and the correlation among the shadowing terms from the AP $m, \forall m\in\mathcal{M}$ to different UEs $k \in\mathcal{K}_d$ ($\ell\in\mathcal{K}_u$) is expressed as:
\begin{align}\label{corr:shadowing}
\mathbb{E}\{F_{mk}F_{jk'}\} \triangleq
\begin{cases}
 4^22^{-\delta_{kk'}/9\,\text{m}},& \text{if $j=m$}\\
 0, & \mbox{otherwise},
\end{cases}, \forall j\in\mathcal{M},
\end{align}
where $\delta_{kk'}$ is the physical distance between UEs $k$ and $k'$.

\begin{figure}[t!]
	\begin{subfigure}[a]{0.5\textwidth}
		\includegraphics[width=90mm]{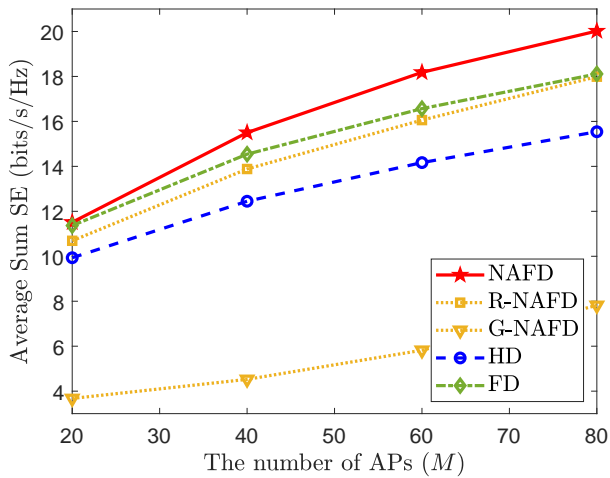}
		\caption{Average sum SE versus the number of APs ($K_d\!=\!K_u\!=\!4$).}
		\label{fig:SEvsM}
	\end{subfigure}
	\begin{subfigure}[a]{0.5\textwidth}
		\includegraphics[width=90mm]{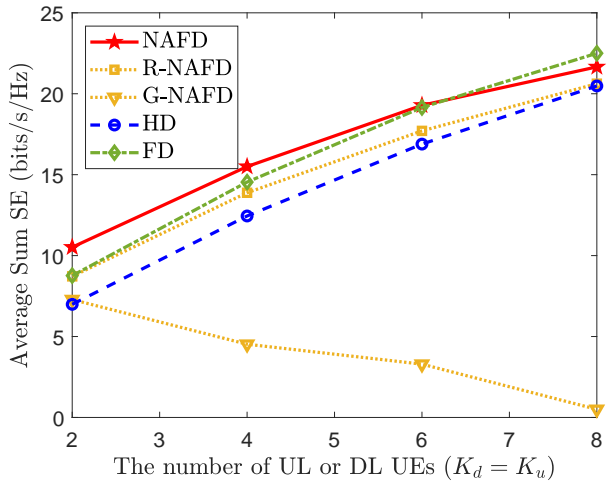}
		\caption{Average sum SE versus the number of UEs ($M=40$).}
		\label{fig:SEvsK}
	\end{subfigure}
	\caption{Comparison among the SE achieved by the proposed $\vvFD$ scheme and baseline schemes ($\mathcal{S}_{QoS}=0.2$ bit/s/Hz, $\sigma_{\text{SI}}^2 /\sigma_n^2 = 50$ dB).}\label{fig:SE-EE}
	\vspace{0em}
\end{figure}

Regarding the power consumption parameters,  we use the parameters of the power consumption in \cite{Emil:TWC:2015:EE,Hien:TGCN:2018,bashar19TGCN}, which are shown in Table~\ref{tab:PowerconsumptionParameter}. We note that the power consumption $P_{\text{SIS},m}$ for SIS strongly depends on the EE of the SIS techniques, used at each AP. Therefore, for a fair comparison, in what follows, we consider the best case of the $\FFD$ scheme with highly energy-efficient SIS techniques, where the power consumption for SIS is sufficiently small and can be ignored in the total power consumption of the $\FFD$ scheme, i.e., $P_{\text{SIS},m} = 0, \forall m$.  

\subsection{Results and Discussions}
We compare our proposed optimized $\vvFD$ scheme with the baseline schemes $\RvFD$, $\GvFD$, $\HHD$, and $\FFD$, which were discussed in Section~\ref{sec:bench}, in terms of sum SE and EE. All the following average results are averaged over $200$ large-scale fading channel realizations. In each channel realization, if the individual SE requirements are not met or the optimization problem of SE or EE maximization of a scheme is infeasible, as discussed in Remarks~\ref{remark:SEfeasible} and~\ref{remark:EEfeasible}, we set the SE or EE of that scheme to zero. The EE values in the results are calculated by using the solution to the problem of maximizing $\EEEE_{fullbh}^{\vFD}$ \eqref{P:EE:fullbh}, which is obtained by Algorithm~\ref{alg:EE}. 

\subsubsection{Effectiveness of the $\vvFD$ scheme in terms of SE} 
Figures~\ref{fig:SEvsM} and~\ref{fig:SEvsK} show the average SE of all the considered schemes versus the number of APs and the different numbers of DL and UL UEs, respectively. Numerical results lead to the following conclusions.

\begin{itemize}
    \item 
    The optimized $\vvFD$ scheme outperforms the heuristic $\RvFD$ and $\GvFD$ schemes. More specifically, it provides performance gains of up to $12\%$ and $150\%$ over $\RvFD$ and $\GvFD$, respectively, which highlights the advantage of our joint optimization solution over heuristic ones.  On the other hand, the remarkable performance gap between the $\RvFD$ and $\GvFD$ verifies the effectiveness of the joint power control and LSFD weight design in NAFD CF-mMIMO systems. Interestingly, $\RvFD$ scheme offers an acceptable SE performance compared with $\vvFD$, which balances the trade-off between performance and complexity. Therefore, it can be deployed instead of the $\vvFD$ scheme if complexity is an issue.  
    \item 
    In the comparison between the proposed $\vvFD$ scheme and the conventional $\HHD$, $\FFD$ schemes, the $\vvFD$ scheme achieves the best SE performance, while the $\HHD$ scheme offers the worst SE performance. This is reasonable because, in $\vvFD$ and $\FFD$ schemes, the UL and DL data transmissions are performed simultaneously, thus, the pre-log factor $\frac{1}{2}$, that comes up in the SE expression of the $\HHD$ scheme, is eliminated. The $\vvFD$ scheme has a smaller number of APs to serve DL or UL UEs than the $\FFD$ scheme, which could lead to lower power for both desired signals and CLI. Nevertheless, with optimizing AP mode assignment, $\vvFD$ is more efficient than $\FFD$ in managing the power resource for sufficiently high power of desired signals and lower power of CLI. Moreover, in the $\vvFD$ scheme, the detrimental impact of residual SI, which exists in the $\FFD$ scheme, is completely removed.
    \item 

    The SE gain of $\vvFD$ over $\HHD$ scheme increases when the number of APs increases. This is because the $\vvFD$ scheme has more degrees-of-freedom in terms of AP mode assignment to manage the CLI.  However, this gain is saturated at around $30\%$. This is because of the AP-to-AP interference. This interference is the fundamental limit of both the $\vvFD$ and $\FFD$ schemes and increases when the number of APs increases. We also note that, due to the presence of residual SI and CLI, the QoS requirement, both the $\vvFD$ and $\FFD$ (under ``antenna-number-preserved'' condition) schemes cannot achieve the promised double gain over the $\HHD$ scheme. When the individual SE constraints are applied, the system needs to spend more power on the UEs with unfavorable links (i.e., lower SEs) to guarantee the SE requirements of these UEs larger than the minimum SE threshold $\SSS_{\ul,\ell}^o$ or $\SSS_{\dl,k}^o$. Therefore, there is stronger interference for the UEs with favorable links, and the SE of these UEs are sacrificed to compensate for the SE of the UEs with unfavorable links.
    \item 
    The SEs of all the considered schemes, except $\GvFD$, are monotonically improved when the number of UEs increases. Note that increasing the number of UEs causes stronger CLI. This result implies that our proposed joint optimization approach can effectively manage the CLI in $\vvFD$, $\RvFD$, $\FFD$ and $\HHD$ schemes. In contrast, the heuristic $\GvFD$ scheme fails to deal with the interference. In our simulation results, it often violates the individual SE requirements and the probability of its SE set to be zero is high.
    \item 
    The gap between the $\vvFD$ and conventional schemes ($\FFD$ and $\HHD$) is diminished when the number of UEs increases. Moreover, $\FFD$ outperforms $\vvFD$ when $K_u=K_d>6$.  These results are reasonable because the number of APs for UL or DL transmission in the $\vvFD$ scheme is smaller than that in the $\FFD$ and $\HHD$ schemes. When the number of UEs increases to a sufficient value, the degrees-of-freedom of the $\FFD$ and $\HHD$ schemes to manage CLI become more than those of the $\vvFD$ and dominate the SE improvement achieved by optimizing the AP mode assignment in $\vvFD$. 
\end{itemize}

\begin{figure}[t!]
	\begin{subfigure}[a]{0.5\textwidth}
		\includegraphics[width=90mm]{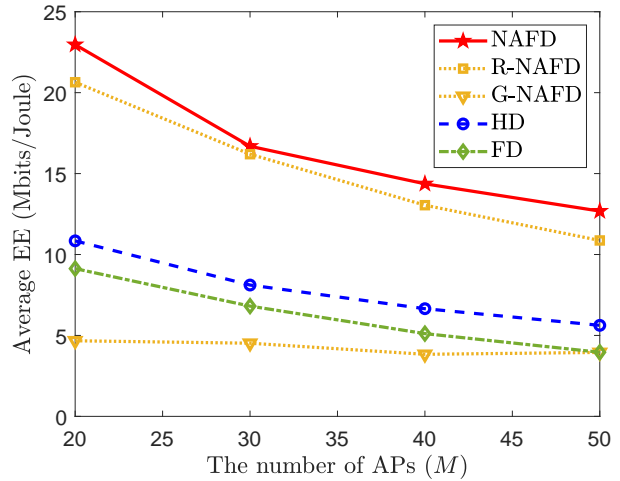}
		\caption{$K_d=K_u=4$.}
		\label{fig:EEvsM}
	\end{subfigure}
	\begin{subfigure}[a]{0.5\textwidth}
		\includegraphics[width=90mm]{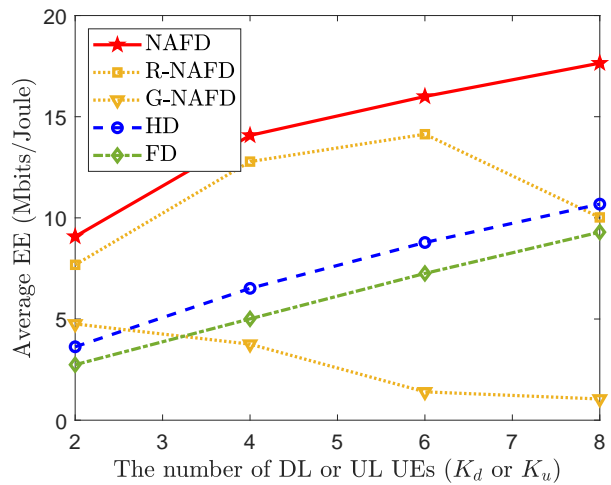}
		\caption{$M=40$.}
		\label{fig:EEvsK}
	\end{subfigure}
	\caption{Comparison among the EE of the proposed $\vvFD$ scheme and baseline schemes ($\mathcal{S}_{QoS}=0.2$ bit/s/Hz, $\sigma_{\text{SI}}^2/\sigma_n^2 = 50$ dB).}\label{fig:EEvK}
	\vspace{-1em}
\end{figure}

\begin{figure}[t!]
	\centering
	\vspace{0em}
	\includegraphics[width=90mm]{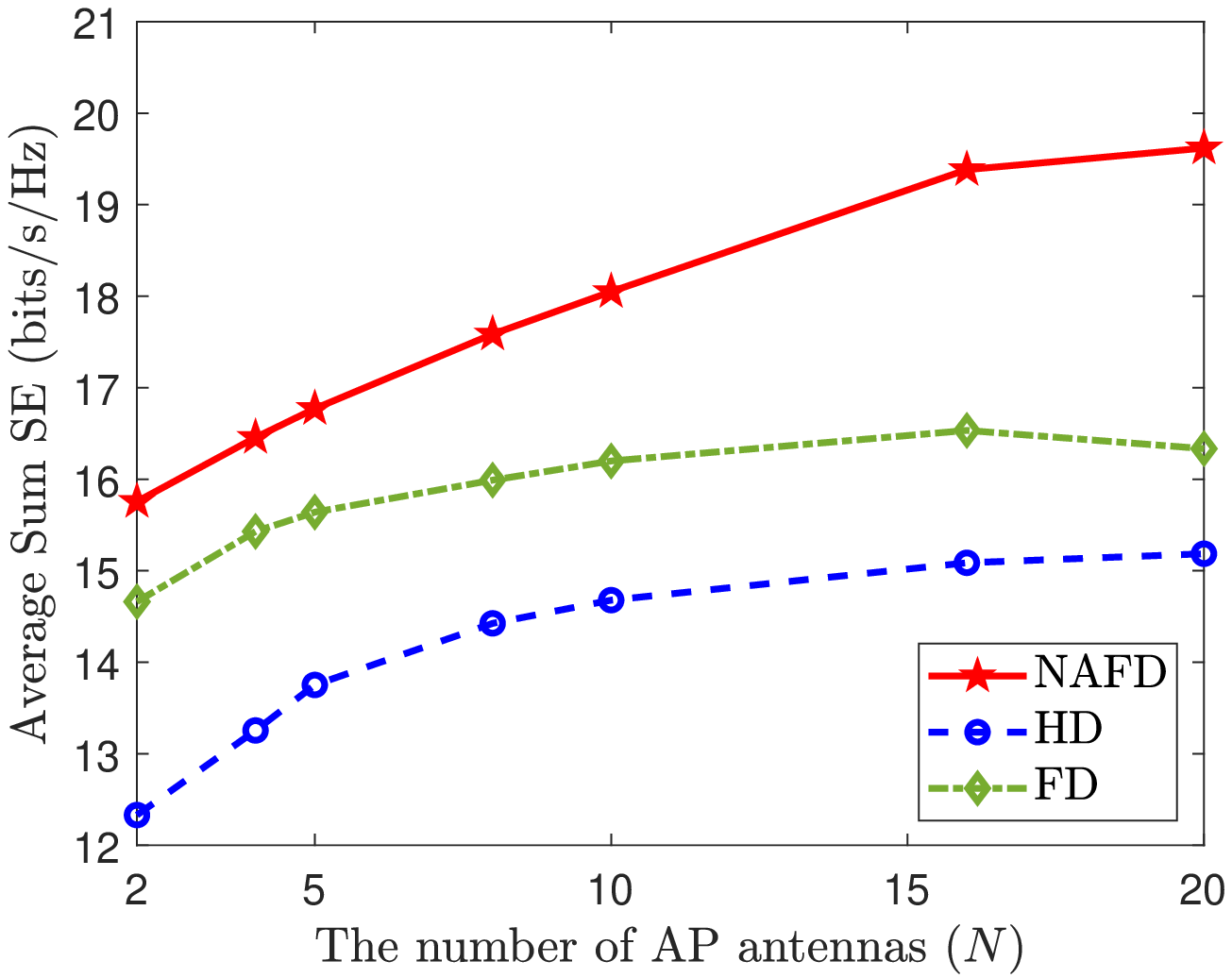}
	\vspace{-0.5em}
	\caption{Impact of the number of AP antennas on the SE ($NM=80, K_d=K_u=4, \mathcal{S}_{QoS} = 0.2$ bit/s/Hz, $\sigma_{\text{SI}}^2/\sigma_n^2 = 50$ dB).}
	\vspace{-1em}
	\label{fig:SEvsN}
\end{figure}

\begin{figure}[t!]
	\centering
	\vspace{0em}
	\includegraphics[width=90mm]{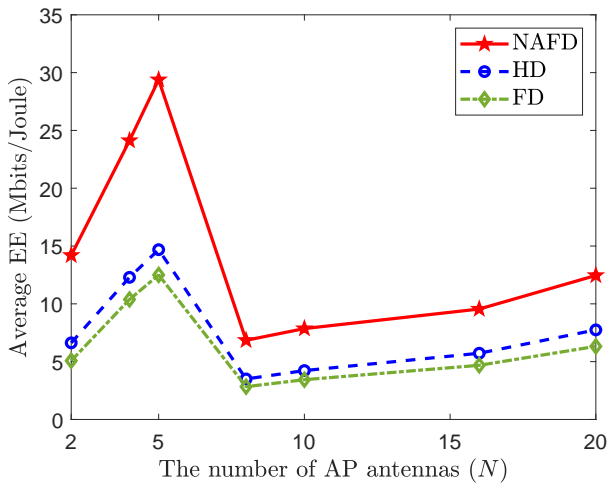}
	\vspace{-0.5em}
	\caption{Impact of the number of AP antennas on the EE ($NM=80, K_d=K_u=4, \mathcal{S}_{QoS} = 0.2$ bit/s/Hz, $\sigma_{\text{SI}}^2/\sigma_n^2 = 50$ dB).}
	\vspace{-1em}
	\label{fig:EEvsN}
\end{figure}

\subsubsection{Effectiveness of the $\vvFD$ scheme in terms of EE} 
Figures~\ref{fig:EEvsM} and~\ref{fig:EEvsK}  show the average EE of the $\vvFD$ and baseline schemes as a function of the number of APs and number of UEs, respectively. From these figures, we have the following observations.
\begin{itemize}
    \item  
    The $\vvFD$ scheme provides a better EE performance than both the heuristic $\RvFD$ and $\GvFD$ schemes. This result highlights the effectiveness of our joint optimization solution over heuristic solutions.  The gap between $\vvFD$ and $\GvFD$ is always noticeable, while that between $\vvFD$ and $\RvFD$ is only remarkable when $K$ is large enough and $M/K$ is small enough. This indicates that the $\RvFD$ scheme can be utilized for the systems that have a large value of $M/K$ with a small number of UEs and requires a lower computational complexity solution with an acceptable loss in the EE performance.
    \item  When the number of UEs increases, the EE performance of $\GvFD$ is degraded, offering the worse performance for $K\geq 4$ among all schemes. However, the EE achieved by the $\RvFD$ is improved by increasing $K$ and then sharply degraded when $K$ increases beyond $6$.
    This is because the CLI is stronger and increases the probability of infeasible cases of these schemes.
    \item 
    The $\vvFD$ scheme achieves a noticeable EE improvement (i.e., up to $200\%$) compared with the $\FFD$ and $\HHD$ schemes. This result is intuitive because the $\vvFD$ scheme can provide a better SE compared with the $\FFD$ and $\HHD$ schemes. Moreover, the $\vvFD$ scheme has a lower number of APs serving DL or UL transmission than the $\RvFD$ and $\GvFD$ schemes, which results in a lower power consumption to the circuit components of the APs (i.e., $P_{\mathtt{cdl},m}$ and $P_{\mathtt{cul},m}$) as well as more power for backhaul links (i.e., $P_{\mathtt{fdl},m}$, $P_{\mathtt{ful},m}$). 
    \item 
    The $\HHD$ scheme slightly outperforms the $\FFD$ scheme in terms of EE. This is due to the fact that in the $\HHD$ scheme, the APs (UEs) consume power in DL (UL) only during half of each time slot, which is reflected by the factor of $\frac{1}{2}$ in~\eqref{eq:Ptotal:HD}. This advantage of the $\HHD$ scheme outperforms the disadvantage of having a lower SE than the $\FFD$ scheme. 
\end{itemize}

\subsubsection{ Impact of the number of AP antennas on SE and EE}
Figures~\ref{fig:SEvsN} and~\ref{fig:EEvsN} illustrate the average sum SE and EE of NAFD, traditional HD, and FD schemes under different numbers of antennas per AP, respectively. Here, the total number of AP antennas is kept fixed, i.e., $NM = 80$. It can be seen from Figure~1 that by increasing the number of antennas at each AP, the sum SE of the NAFD is monotonously increased, while in the cases of FD and HD networks, it does not change much for $N \geq 10$. Therefore, by increasing the number of antennas at each AP, the benefits of the optimized NAFD network are more pronounced. 

From Fig.~\ref{fig:EEvsN}, we can see that there is an optimal number of antennas per AP for a maximum EE, e.g., $N=5$ in the considered setting. In principle, this is  reasonable because when the number of AP antennas increases, the number of APs decreases. On one hand, there are possibly UEs that are now far away from the APs and have low SEs, which slows down the increase in the sum SE. On the other hand, the APs need to use more power to serve these far UEs to guarantee their quality-of-service SEs, which leads to a lower EE. In the regime of small values of $N$, i.e., $N\leq 5$, the number of APs, i.e., $M \geq 16$, is still large and the APs create proper coverage for all the UEs. Thus, increasing the number of APs leads to higher achievable UE rates without increasing the power consumption too much, and hence, increases the EE.  In the regime of large values of $N$, e.g., $N\geq 8$, the number of APs, i.e., $M\leq 10$, is small and the coverage in the network is poor. The increase in power consumption to achieve QoS SEs dominates the increase in the achievable UE rates, which leads to a significant decrease in the EE. Also, in this regime, the EE slightly increases when $N$ increases. This is because the decrease in the number of APs now leads to a reduction in the total fixed power consumption and the total power consumption in backhaul links, while the increase in the achievable SE of UEs is not big enough. 

\subsubsection{Impact of residual SI on the sum SE of the $\FFD$ scheme}
Figure~\ref{fig:SEvsSI} depicts the average SEs of CF-mMIMO systems with the proposed $\vvFD$ and traditional $\HHD$ and $\FFD$ schemes. We recall that the SI has no effect on the performance of the $\vvFD$ and $\HHD$ schemes.  The first observation is that $\FFD$ scheme outperforms $\HHD$ counterpart at a sufficiently small level of residual SI.  $\FFD$ scheme provides more than $18\%$ SE performance gain over the $\HHD$. However, its performance is dramatically degraded when $\sigma_{\text{SI}}^2/\sigma_n^2$ increases. Another interesting observation is that the proposed $\vvFD$ scheme provides the best performance irrespective of the residual SI strength and significantly better performance than $\FFD$ for $\sigma_{\text{SI}}^2/\sigma_n^2>90$ dB. This result further confirms that the $\vvFD$ scheme is well-suited for CF-mMIMO networks. 


\begin{figure}[t!]
	\centering
	\vspace{0em}
	\includegraphics[width=90mm]{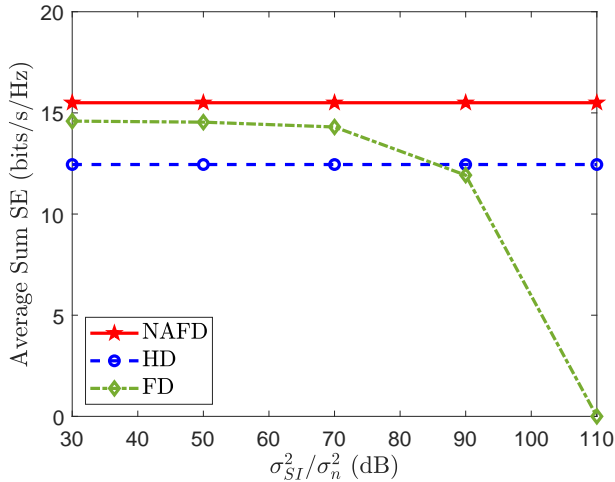}
	\vspace{-0.5em}
	\caption{Average SE versus different residual SI power levels ($M=40, K_d=K_u=4, \mathcal{S}_{QoS} = 0.2$ bit/s/Hz).}
	\vspace{-1em}
	\label{fig:SEvsSI}
\end{figure}

\begin{figure}[t!]
	\centering
	\vspace{0em}
	\includegraphics[width=90mm]{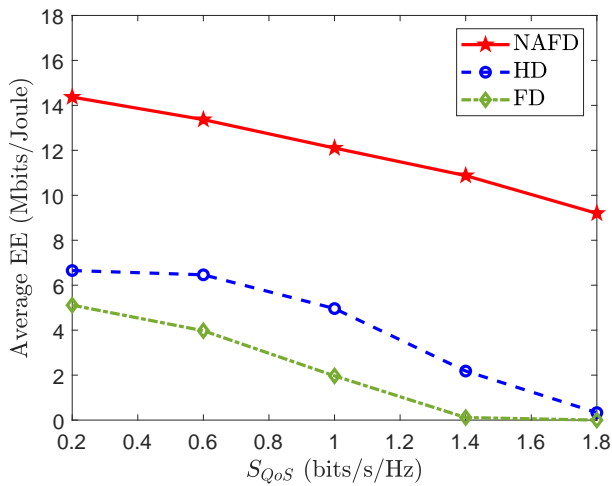}
	\vspace{-0.5em}
	\caption{Impact of individual SE requirements on EE ($M\!=\!40, K_d\!=\!K_u\!=\!4$ bit/s/Hz, $\sigma_{\text{SI}}^2/\sigma_n^2 = 50$ dB).}
	\vspace{-1em}
	\label{fig:EEvsQoS}
\end{figure}

 The SE of the NAFD is relatively greater than that of the FD scheme, especially when the number of APs and/or the number of antennas per each AP is increased. The EE gain achieved by the NAFD over the FD is much greater than the SE gain. This is because the power consumption of the NAFD is remarkably less than that of the FD. In a FD system, all APs contribute to UL and DL transmissions, while in a NAFD system, a part of the APs are assigned for DL transmission and the other part of the APs are used for UL reception. Thus, the power consumption of the transceiver chains and the power consumed for UL and DL transmission is significantly reduced.

\subsubsection{Impact of individual SE requirements on EE}
Figure~\ref{fig:EEvsQoS} shows the EE of CF-mMIMO systems with $\vvFD$, $\HHD$, and $\FFD$ schemes versus the individual SE requirement $\mathcal{S}_{QoS}$. It can be seen that the EE of all schemes decreases as $\mathcal{S}_{QoS}$ increases. This observation can be interpreted as follows. To achieve higher SE requirements, more power is consumed in the system, and then the CLI becomes more severe, which finally leads to lower EE values. Moreover, we can observe that both  $\HHD$ and $\FFD$ schemes fail to satisfy the SE requirement for $\mathcal{S}_{QoS} \geq 1.8$ bits/s/Hz,  while the $\vvFD$ scheme still meets the individual SE requirements of the system. This result shows that the $\vvFD$ scheme can be much more energy-efficient than the $\HHD$ and $\FFD$ schemes while achieving a high SE target.

\subsubsection{Quality of the proposed EE maximization solution}
We investigate the quality of the proposed solution for the EE maximization by demonstrating the gap between $\EEEE^{\vFD}$ and $\EEEE_{fullbh}^{\vFD}$ in Fig.~\ref{fig:EEvsEEfullbh}. It is clear that the gap is small for different numbers of APs $M$. This is because the only difference between maximizing $\EEEE^{\vFD}$ and $\EEEE_{fullbh}^{\vFD}$ is to design the AP mode assignment for optimizing the rate-dependent power consumption $\PbhvFD$ in backhaul links.
However, $\PbhvFD$ contributes only a small portion to the total power consumption $\PtotvFD$, which is shown in Table~\ref{Tab:r}. Specifically, the maximum contribution of $\PbhvFD$ in $\PtotvFD$ is $7.95 \%$, and even the maximum contribution of $\PbhfulvFD$ in $\PtotbhvFD$ is only $14.18\%$. Therefore, optimizing the AP mode assignment for further reducing rate-dependent power consumption $\PbhvFD$ brings no benefit to reduce $\PtotvFD$ much. Hence, the proposed solution  for the $\EEEE_{fullbh}^{\vFD}$ maximization problem in~\eqref{P:EE:fullbh} is sufficiently close to the solution of the $\EEEE$ maximization problem~\eqref{P:EE}. 

\begin{figure}[t!]
	\centering
	\vspace{0em}
	\includegraphics[width=90mm]{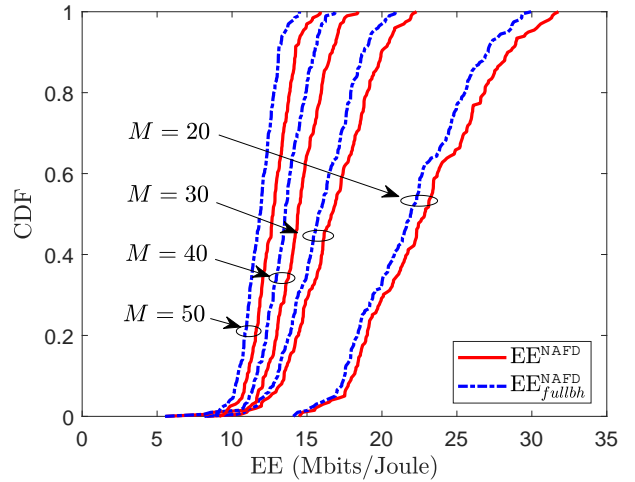}
	\vspace{-0.5em}
	\caption{ Comparison of $\EEEE^{\vFD}$ and $\EEEE_{fullbh}^{\vFD}$ ($K_d=K_u=4, \mathcal{S}_{QoS} = 0.2$ bit/s/Hz, $\sigma_{\text{SI}}^2/\sigma_n^2 = 50$ dB).}
	\vspace{1em}
	\label{fig:EEvsEEfullbh}
\end{figure}

\begin{table}[t]
\renewcommand{\arraystretch}{1.0}
\caption{Average $\frac{\PbhvFD}{\PtotvFD}$ and $\frac{\PbhfulvFD}{\PtotbhvFD}$}
\vspace{1mm}
\label{Tab:r}
\begin{center}
\vspace{-4mm}
\begin{tabular}{|c|c|c|c|c|}
\hline
$M$ & $20$ & $30$ & $40$ & $50$ 
\\
\hline
$\frac{\PbhvFD}{\PtotvFD}$ & $6.71 \%$ & $6.91 \%$ & $7.54 \%$ & $7.95 \%$
\\ &&&&\\
\hline
$\frac{\PbhfulvFD}{\PtotbhvFD}$ & $10.55\%$ & $11.42 \%$ & $12.97 \%$ & $14.18 \%$\\
&&&&\\
\hline
\end{tabular}
\end{center}
\vspace{-1mm}
\end{table}

\section{Conclusion}~\label{Sec:conc}
We have investigated the sum SE and EE performance of NAFD CF-mMIMO systems.  We proposed a large-scale-fading-based joint optimization approach of designing the AP mode assignment, UL and DL power control, and LSFD weights to maximize the sum SE and EE under a realistic power consumption model, individual QoS SE requirements and transmit power constraints. The proposed approach was then applied to maximize the sum SE and EE of CF-mMIMO systems with heuristic NAFD approaches as well as traditional HD and FD approaches. We showed that our jointly optimized NAFD approach provides significant SE and EE gains over the heuristic NAFD approaches. Our results also confirm that in a CF-mMIMO system, the NAFD scheme can achieve a noticeable SE gain, while improving remarkably the  EE compared with the HD and FD schemes. Insights from the simulation results demonstrated that the ratio between the  number of APs and the number of UEs is a dominating factor of the system performance. If the ratio is large, the NAFD scheme with random AP mode assignment offers an acceptable performance; accordingly, balances the performance and complexity of the NAFD scheme. Finally, finding lower-complexity resource allocation approaches, such as machine learning-based algorithms, that can achieve acceptable SE and EE performance in NAFD CF-mMIMO is a timely research topic for future research. 
   
\appendices
\section{ Downlink SE Derivation}
\label{DL:SE:proof}
According to~\eqref{eq:ykdl}, in order to detect $s_{k}^{\dl}$, the $k$-th DL UE need to have access to the effective channel $\sqrt{\rho_d}\sum_{m \in \mathcal{M}} \theta_{mk} \left(\gmkd\right)^T\left(\hgmkd\right)^*$. However, since there is no pilot in DL, this CSI is not available at DL UE $k$. To deal with this challenge, UE $k$ will rely on the stochastic CSI to detect $s_{k}^{\dl}$. The received signal in~~\eqref{eq:ykdl} can be rewritten as
\begin{align}~\label{eq:ykdl:detect}
y_k^{\dl}
&=\mathbb{DS}_k s_{k}^{\dl} + \mathbb{BU}_k s_{k}^{\dl}
+\sum_{k'\in\mathcal{K}_d \setminus k} \mathbb{DI}_{kk'}s_{k'}^{\dl}
\nonumber\\
&\hspace{2em}
+\sum_{\ell\in \mathcal{K}_{u}} \mathbb{UI}_{\ell}s_{\ell}^{\ul}
+w_{k}^{\dl},
\end{align}
where 
\begin{align}
\mathbb{DS}_k &\triangleq\sqrt{\rho_d}\mathbb{E}\bigg\{\sum_{m \in \mathcal{M}} \eta_{mk}^{1/2}
\left(\gmkd\right)^T\left(\hgmkd\right)^*\bigg\},
\nonumber\\
\mathbb{BU}_k &\triangleq\sqrt{\rho_d}
\Bigg( \sum_{m \in \mathcal{M}} \eta_{mk}^{1/2}
\left(\gmkd\right)^T\left(\hgmkd\right)^*
\nonumber\\
&\hspace{4em}
-\mathbb{E}\left\{\sum_{m \in \mathcal{M}} \eta_{mk}^{1/2}
\left(\gmkd\right)^T\left(\hgmkd\right)^*\right\}
\Bigg)\nonumber\\
\mathbb{DI}_{kk'} &\triangleq
\sqrt{\rho_d}
\sum_{m \in \mathcal{M}} \eta_{mk'}^{1/2}
\left(\gmkd\right)^T\left(\hgmkpd\right)^*,
\nonumber\\
\mathbb{UI}_{\ell}&\triangleq
h_{k\ell}\sqrt{\rho_u \tilde{\varsigma}_\ell},
\end{align}
represent the strength of the desired DL signal, the beamforming
gain uncertainty, cross-link interference caused by the $k'$-th DL UE, and cross-link interference caused by the $\ell$-th UL UE, respectively. The sum of the last four terms in~\eqref{eq:ykdl:detect} is treated as the effective noise, which is uncorrelated with the first term, i.e., the desired signal~\cite{Hien:cellfree}. Utilizing the use-and-then-forget capacity bounding technique in~\cite{Hien:cellfree}, the corresponding SE of the  DL UE $k$ is given by
\begin{align}~\label{eq:Sdlk1}
    &\mathcal{S}_{\dl,k}  =  \frac{\tau_c-\tau_t}{\tau_c}
    \log_2 \Big(1 +  \nonumber\\
&\vspace{0em}
\frac{\vert \mathbb{DS}_k \vert^2}
{ \mathbb{E}\left\{\!\vert \mathbb{BU}_k \vert^2 \!\right\}
\!\!+\!\sum_{k'\in\mathcal{K}_d \setminus k}\!\!\mathbb{E}\left\{\!\vert \mathbb{DI}_{kk'} \vert^2\!\right\}   
+\! \sum_{\ell\in \mathcal{K}_{u}}\!\!\mathbb{E}\left\{\!\vert \mathbb{UI}_{\ell} \vert^2\!\right\}
\!+\!1}\Big).
\end{align}

Therefore, we need to compute $\mathbb{DS}_k$, $\mathbb{E}\left\{ \vert \mathbb{BU}_k\vert^2\right\}$, $\mathbb{E}\left\{\vert\mathbb{DI}_{kk'}\vert^2\right\}$, and $\mathbb{E}\left\{\vert \mathbb{UI}_{\ell} \vert^2\right\}$. To compute $\mathbb{DS}_k$, we have
\begin{align}~\label{eq:DSk}
  \mathbb{DS}_k &=\sqrt{\rho_d}\mathbb{E}\bigg\{\sum_{m \in \mathcal{M}} \eta_{mk}^{1/2}
\left(\hgmkd+\tgmkd\right)^T\left(\hgmkd\right)^*\bigg\} 
\nonumber\\
&
=
\sqrt{\rho_d}\sum_{m \in \mathcal{M}} \eta_{mk}^{1/2}
\mathbb{E}\bigg\{\Vert\hgmkd\Vert^2\bigg\}
\nonumber\\ 
&= \sqrt{\rho_d}\sum_{m \in \mathcal{M}} N \eta_{mk}^{1/2} \gamdmk,
\end{align}
where we have used the fact that $\hgmkd$ and $\tgmkd$ are zero mean and independent.

Noticing that  the variance of a sum of
independent RVs is equal to the sum of the variances, we can derive $\mathbb{E}\left\{ \vert \mathbb{BU}_k\vert^2\right\}$ as
\begin{align}~\label{eq:EBUk}
\mathbb{E}\left\{ \vert \mathbb{BU}_k\vert^2\right\} &=
{\rho_d}
\sum_{m \in \mathcal{M}} \eta_{mk}
\mathbb{E}\bigg\{
\bigg\vert
\left(\gmkd\right)^T\left(\hgmkd\right)^*
\nonumber\\
&\hspace{5em}
-
\mathbb{E}\Big\{ 
\left(\gmkd\right)^T\left(\hgmkd\right)^*\Big\}\bigg\vert^2\bigg\}
\nonumber\\
&=
{\rho_d}
\sum_{m \in \mathcal{M}} \eta_{mk}
\bigg(
\mathbb{E}\bigg\{
\bigg\vert
\left(\gmkd\right)^T\left(\hgmkd\right)^*\bigg\vert^2\bigg\}
\nonumber\\
&\hspace{5em}
-
\bigg\vert\mathbb{E}\Big\{ 
\left(\gmkd\right)^T\left(\hgmkd\right)^*\Big\}\bigg\vert^2\bigg)
\nonumber\\
&=
{\rho_d}
\sum_{m \in \mathcal{M}} \eta_{mk}
\bigg(
\mathbb{E}\Big\{ \Big\vert
\left(\tgmkd\right)^T\left(\hgmkd\right)^*\Big\vert^2\Big\}
\nonumber\\
&\hspace{5em}
+
\mathbb{E}\Big\{
\big\Vert\hgmkd\big\Vert^4\Big\}-
N^2(\gamdmk)^2\bigg)
,\nonumber\\
&=
{\rho_d} N
\sum_{m \in \mathcal{M}} \eta_{mk}\gamdmk\betamkd,
\end{align}
where the final result follows from the fact that 
$\mathbb{E}\big\{ \big\vert \left(\tgmkd\right)^T\left(\hgmkd\right)^*\big\vert^2\big\}=\mathbb{E}\big\{ \big\vert \tgmkd\vert^2\big\}\mathbb{E}\big\{\vert\hgmkd\vert^2\big\} = N\gamdmk(\betamkd-\gamdmk)$ and $\mathbb{E}\big\{\big\Vert\hgmkd\big\Vert^4\big\} = N(N+1)(\gamdmk)^2$.

Following similar steps, we can compute $\mathbb{E}\left\{\vert \mathbb{UI}_{\ell} \vert^2\right\}$ and $\mathbb{E}\left\{\vert \mathbb{UI}_{\ell} \vert^2\right\}$ as 
\begin{align}~\label{eq:EDIkkp}
\mathbb{E}\left\{ \vert \mathbb{DI}_{kk'} \vert^2\right\} &={\rho_d} N
\sum_{m \in \mathcal{M}} \eta_{mk}\gamdmk\betamkd,
\nonumber\\
\mathbb{E}\left\{\vert \mathbb{UI}_{\ell} \vert^2\right\} &={\rho_u \tilde{\varsigma}_\ell}\betakldu.
\end{align}

To this end, by substituting~\eqref{eq:DSk}, \eqref{eq:EBUk}, and~\eqref{eq:EDIkkp} into~\eqref{eq:Sdlk1}, the desired result in~\eqref{eq:DL:SE}, is obtained. 


\vspace{-0.4em}
\bibliographystyle{IEEEtran}

\end{document}